\title{\LARGE \bf
Sum-of-Squares Data-driven Robustly Stabilizing and Contracting Controller Synthesis for Polynomial Nonlinear Systems
}
\author{Hamza~El-Kebir$^{1}$ and Melkior~Ornik$^{2}$
\thanks{$^{1}$H. El-Kebir is with the Beckman Institute for Advanced Science and Technology, University of Illinois Urbana-Champaign, Urbana,
IL 61801, USA
{\tt\small elkebir2@illinois.edu}. Corresponding author.}%
\thanks{$^{2}$M. Ornik is with the Department
of Aerospace Engineering and the Coordinated Science Laboratory, University of Illinois Urbana-Champaign, Urbana,
IL 61801, USA
{\tt\small mornik@illinois.edu}}%
\thanks{Research reported in this publication was supported by the National Aeronautics and Space Administration under award number 80NSSC21K1030, and the Air Force Office of Scientific Research under award number FA9550-23-1-0131.}
}
\newcommand{\hl}[1]{#1}
\newcommand{\hil}[1]{#1}
\newcommand{\hili}[1]{#1}
\newcommand{\hilir}[1]{#1}
\renewcommand{\vec}[1]{\boldsymbol{\mathbf{#1}}}
\newtheorem{theorem}{Theorem}
\newtheorem{lemma}{Lemma}
\newtheorem{proposition}{Proposition}
\theoremstyle{definition}
\newtheorem{definition}{Definition}
\newtheorem{assumption}{Assumption}
\theoremstyle{remark}
\newtheorem{remark}{Remark}
\newtheorem{corollary}{Corollary}
\newcommand{\dd}{\mathrm{d}}
\newcommand{\transp}{^\mathsf{T}}
\newcommand{\inv}{^{-1}}
\newcommand{\pinv}{^{\dagger}}
\newcommand{\itransp}{^{-\mathsf{T}}}
\newcommand{\oslip}{\mathrm{osLip}}
\newcommand{\dhat}[1]{\ThisStyle{\setbox0=\hbox{$\SavedStyle#1$}%
  \stackengine{0pt}{\SavedStyle#1}{\SavedStyle\hspace{.2\ht0}%
  \hat{\vphantom{#1}}\kern\dimexpr2.2\LMpt+.7pt\relax\hat{\vphantom{#1}}}{O}{c}{F}{T}{L}}%
}
\newcommand{\dcheck}[1]{\ThisStyle{\setbox0=\hbox{$\SavedStyle#1$}%
  \stackengine{0pt}{\SavedStyle#1}{\SavedStyle\hspace{.2\ht0}%
  \check{\vphantom{#1}}\kern\dimexpr2.2\LMpt+.7pt\relax\check{\vphantom{#1}}}{O}{c}{F}{T}{L}}%
}
\newcommand{\hatcheck}[1]{\ThisStyle{\setbox0=\hbox{$\SavedStyle#1$}%
  \stackengine{0pt}{\SavedStyle#1}{\SavedStyle\hspace{.2\ht0}%
  \hat{\vphantom{#1}}\kern\dimexpr2.2\LMpt+.7pt\relax\check{\vphantom{#1}}}{O}{c}{F}{T}{L}}%
}
\newcommand{\checkhat}[1]{\ThisStyle{\setbox0=\hbox{$\SavedStyle#1$}%
  \stackengine{0pt}{\SavedStyle#1}{\SavedStyle\hspace{.2\ht0}%
  \check{\vphantom{#1}}\kern\dimexpr2.2\LMpt+.7pt\relax\hat{\vphantom{#1}}}{O}{c}{F}{T}{L}}%
}
\newcommand{\DD}{\vec{\mathrm{D}}}
\newcommand{\hslashslash}{%
  \raisebox{.9ex}{%
    \scalebox{.7}{%
      \rotatebox[origin=c]{0}{$-$}%
    }%
  }%
}
\newcommand{\deltaslash}{%
  {%
   \vphantom{d}%
   \ooalign{\kern.05em\smash{\hslashslash}\hidewidth\cr$\delta$\cr}%
   \kern.05em
  }%
}
\renewcommand{\vec}[1]{\boldsymbol{\mathbf{#1}}}
\begin{document}

\maketitle
\thispagestyle{empty}
\pagestyle{empty}

\begin{abstract}
This work presents a computationally efficient approach to data-driven robust contracting controller synthesis for polynomial control-affine systems based on a sum-of-squares program. In particular, we consider the case in which a system alternates between periods of high-quality sensor data and low-quality sensor data. In the high-quality sensor data regime, we focus on robust system identification based on the data informativity framework. In low-quality sensor data regimes we employ a robustly contracting controller that is synthesized online by solving a sum-of-squares program based on data acquired in the high-quality regime, so as to limit state deviation until high-quality data is available. This approach is motivated by real-life control applications in which systems experience periodic data blackouts or occlusion, such as autonomous vehicles undergoing loss of GPS signal or solar glare in machine vision systems. We apply our approach to a planar unmanned aerial vehicle model subject to an unknown wind field, demonstrating its uses for verifiably tight control on trajectory deviation.
\end{abstract}


%
\IEEEpeerreviewmaketitle

\section{Introduction}

%

Data-driven identification and control has long been of interest as exemplified by the need for adaptive and robust control techniques. The quality of identified system models is strongly dependent on the available output data and its properties, making it essential to account for data integrity in estimation \cite{Eising2023}. This is even more so the case when synthesizing controllers based on output data, as is the case in the data-informativity framework \cite{VanWaarde2020}, where controllers can in some cases be synthesized \emph{directly from data}, bypassing classical synthesis structures involving sequential system identification and controller synthesis. While the development of stabilizing controllers based directly on data is useful, these approaches often leave much desired in the way of robustness and other operationally essential system theoretic properties.

In this work, we are particularly interested in ensuring local \emph{contractivity}, wherein trajectories starting at different initial states exponentially contract to one central trajectory \cite{Lohmiller1998, Lohmiller2000}. This notion is distinct from stability, where a stable system does not necessarily exhibit contraction other than in the linear case, and contractive systems are not necessarily stable \cite{Lohmiller1998, Aminzare2014a}. While past work has considered the effect of system inputs \cite{Sontag2010a} and small disturbances \cite{Margaliot2016}, little work exists in the way of synthesizing robustly contracting controllers. Namely, \cite{Tsukamoto2021} develops a learning-based approach to robust tube-based planning with contraction guarantees, an approach that necessitates computationally demanding intermediate machine learning prior to planning. Focusing on a control barrier function approach, \cite{Lopez2021} presents a data-driven method to ensure contraction; in doing so, a rigid nominal controller structure is demanded to allow for control contraction metrics to be computed offline in advance, severely restricting the applicable class of unknown systems.

To motivate our work, we consider systems that experience output model switching. This form of switching is often seen in practical applications wherein systems experience a loss of communication, line of sight, or otherwise lose access to (external) sensor information \cite{El-Kebir2024b}. For instance, in the case of unmanned aerial vehicles (UAVs), GPS denial can often be experienced when traversing hostile areas, building structures, or bodies of water \cite{Balamurugan2016}. In such cases, internal sensors such as inertial measurement units (IMUs) must be relied on, which often accumulate error due to sensor drift, therefore rendering sensor fusion approaches involving IMUs and GPS the leading approach to UAV state estimation. 
On account of degraded sensor estimation, set-propagation-based guaranteed observers \cite{Althoff2021a} and trajectory deviation integral inequalities \cite{El-Kebir2023} often produce exponentially expanding reachable sets, making it prohibitive to ensure safety across large periods of time without altering the controller structure to endow the ensemble of possible system realizations with contraction. Curbing this exponentially expansivity through control action is precisely the focus of the present work.

\begin{figure*}[h]
	\centering
	\includegraphics[width=\linewidth]{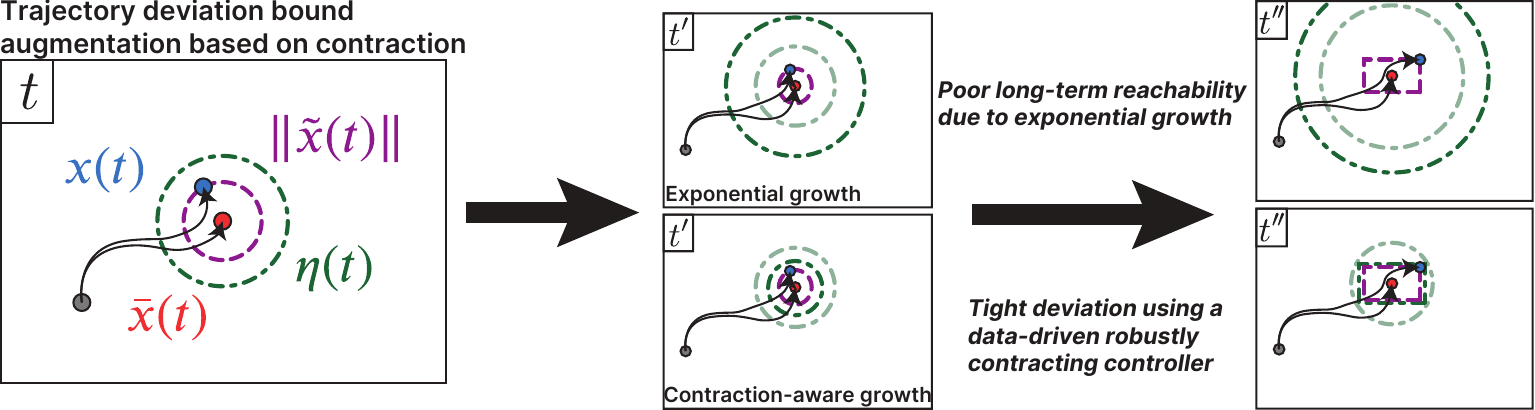}
	\caption{\hil{Overview of the data-driven robustly contracting controller approach (bottom), compared to a contraction-unaware approach based on expanding integral inequalities (top).}}
	\label{fig:overview}
\end{figure*}

In this work, we present a novel data-driven sum-of-squares approach to ensure robust stability and contraction over a compact set $K$ for a class of polynomial control-affine polynomial systems. Our approach is based on recent results due to Eising and Cortes \cite{Eising2023} and Davydov \emph{et al.} \cite{Davydov2022} on data-driven contraction and weighted Euclidean contraction conditions, respectively, as well as a well-known result due to Rantzer \cite{Rantzer2001, Zakeri2014} that forms a dual to Lyapunov's stability theorem. By ensuring stability of one nominal or learned system realization on $K$ and data-driven robust contraction on all potentially identifiable system realizations on $K$ based on the data noise characteristics, we synthesize a locally robustly stabilizing and contracting controller in spite of noise characteristics. Our approach is illustrated in Fig.~\ref{fig:overview}. We demonstrate our approach on a polynomial nonlinear system featuring a UAV's planar dynamics, showing how our approach can formulate explicit data-driven conditions on contraction rates using an efficient sum-of-squares approach.

\section{Preliminaries}

Given a \hl{non-singular} matrix $R \in \mathbb{R}^{n \times n}$, let $\Vert x \Vert_{p, R} := \Vert R x \Vert_p$ denote the weighted $p$-norm. For a matrix $B \in \mathbb{R}^{n \times m}$, let $B\pinv$ denote its Moore--Penrose pseudo-inverse. Let $\mathcal{K}(\mathbb{R}^n)$ [$\mathcal{KC}(\mathbb{R}^n)$] denote the family of compact [compact convex] subsets of $\mathbb{R}^n$. Let $\mathcal{B}_r^n$ be \hl{the zero-centered} closed ball with radius $r > 0$ in $\mathbb{R}^n$.
Let $\mathbb{S}^n$ denote the space of symmetric real $n \times n$ matrices. If $A \in \mathbb{S}^{n + m}$, when $n + m$ clearly follow from context, we can partition $A$ as
$
	A = \begin{bmatrix}
		A_{11} & A_{12} \\
		A_{21} & A_{22}
	\end{bmatrix},
$
where $A_{11} \in \mathbb{S}^n$ and $A_{22} \in \mathbb{S}^m$. We denote the Schur complement of $A$ with respect to $A_{22}$ by $A | A_{22} := A_{11} - A_{12} A_{22}\pinv A_{21}$. Such a partitioned matrix $A$ gives rise to the following set defined by a quadratic matrix inequality (QMI):
\begin{equation*}
	\mathcal{Z}(A) := \left\{ Z \in \mathbb{R}^{m \times n} : \begin{bmatrix} I_n \\ Z \end{bmatrix}\transp A \begin{bmatrix} I_n \\ Z \end{bmatrix} \geq 0 \right\}.
\end{equation*}

We define the \emph{logarithmic norm} given a norm $\Vert \cdot \Vert_p$ on $\mathbb{R}^n$ acting on a linear operator $A \in \mathbb{R}^{n \times n}$ \cite{Soderlind2006} as:
\begin{equation}
	\mu_{p, R} (A) := \lim_{h \searrow 0} \frac{\Vert R( I_n + h A) \hl{R\pinv} \Vert_p - 1}{h}.
\end{equation}
\hl{For any symmetric matrix $P$, let $P > 0$ denote positive definiteness.} Recall that the logarithmic weighted Euclidean norm for a weighting matrix $P^{1/2}$ where $P = P\transp > 0$, is characterized \cite[Cor.~7]{Davydov2022} as:
\begin{equation}\label{eq:weighted log norm}
	\mu_{2, P^{1/2}} (A) = \lambda_{\max} \left( \frac{P A P\inv + \hl{A\transp}}{2} \right).
\end{equation}

For a system $\dot{x} = f(x)$, \hl{from \eqref{eq:weighted log norm} and the Gr\"onwall--Bellman lemma, it follows that $\Vert x(t) - x'(t) \Vert_{p, P^{1/2}} \leq \Vert x(0) - x'(0) \Vert_{p, P^{1/2}} \exp(\gamma t)$ for $\gamma := 
\sup_{x \in \mathbb{R}^n} \mu_{p, P^{1/2}}(\DD f(x))$ \cite[Thm.~29]{Davydov2022}}. Here, the system is \emph{contracting} if $\gamma < 0$ and \emph{expanding} if $\gamma > 0$. We denote the total derivative of a differentiable function $f : \mathbb{R}^n \to \mathbb{R}^n$ at $x \in \mathbb{R}^n$ by $\DD f(x) \in \mathbb{R}^{n \times n}$ such that $\lim_{x' \to x} \Vert f(x') - f(x) - \DD f(x)(x-x') \Vert / \Vert x' - x \Vert = 0$.

We briefly introduce the data informativity framework \cite{Eising2023}, which will enable us to produce robust contraction certificates later on in this work.

\subsection{Data-driven Identification}

We consider a system of the form:
\begin{equation}\label{eq:nom sys}
	\dot{x}(t) = f(x(t)) + g(x(t)) u, \quad x(0) = x_0 \in \mathbb{R}^n,
\end{equation}
where $f : \mathbb{R}^n \to \mathbb{R}^n$ is an unknown Lipschitz continuous function, $g : \mathbb{R}^n \to \mathbb{R}^{n \times m}$ is a \textit{known} continuous map. 

\hil{We are interested in identifying function $f$ in the case of noisy observations $y = f(x) + w$, where a set of noise inputs $W \ni w$ is explicitly considered in the construction of a least-squares estimator for $f$. This latter incorporation of a deterministic set of noise inputs is key in this work, since it will enable us to incorporate the effect of measurement uncertainty on our controller synthesis scheme, allowing for \emph{immediate data-driven construction of noise-robust controllers}. \hili{In this formulation, we effectively assume noisy knowledge of $\dot{x}$, $g(x)$ and $u$.}}

\hil{At the end of this section, we address the problem of guaranteeing contractivity for an unknown function $f$ in \eqref{eq:nom sys} that is identified on the basis of noisy measurements $y_i = f(x_i) + w_i$, for $w_i$ in some known noise set $W$.}


Let us assume that $b : \mathbb{R}^n \to \mathbb{R}^k$ is a given vector-valued basis function comprised of a family of basis functions $\phi_i : \mathbb{R}^n \to \mathbb{R}$ for $i=1,\ldots,k$:
\begin{equation}\label{eq:basis functions}
	b(x) := \begin{bmatrix}
		\phi_1(x) & \cdots & \phi_k(x)
	\end{bmatrix}\transp,
\end{equation}
such that $f(x) = \hat{\phi}(x) := \hat{\theta}\transp b(x)$ for all $x \in \mathbb{R}^n$, for some $\hat{\theta} \in \mathbb{R}^{k \times n}$.
Based on these basis functions, we define $\phi_\theta : \mathbb{R}^n \to \mathbb{R}^n$ parameterized by $\theta \in \mathbb{R}^{k \times n}$ as $\phi_\theta (x) = \theta\transp b(x)$. It follows from our assumptions that $\hat{\phi} = \phi_{\hat{\theta}}$.

Let us assume that we collect noisy measurements $y_i = \hat{\phi}(x_i) + w_i$ for $i=1,\ldots,T$ at points $\{x_i\}_{i=1}^{T}$. Note that $y_i$ here \emph{does not} denote the output of a system; in fact, we shall consider $y_i = f(x_i) + w_i = \hat{\phi}(x_i) + w_i$ in this work. The set of data can be expressed in matrix form as follows:
\begin{equation}
\begin{split}
	&Y := \begin{bmatrix}
		y_1 & \cdots & y_T
	\end{bmatrix}, \quad
	W := \begin{bmatrix}
		w_1 & \cdots & w_T
	\end{bmatrix}, \\
	&\Phi := \begin{bmatrix}
		b(x_1) & \cdots & b(x_T)
	\end{bmatrix} = \begin{bmatrix}
		\phi_1 (x_1) & \cdots & \phi_1 (x_T) \\
		\vdots & \ddots & \vdots \\
		\phi_k (x_1) & \cdots & \phi_k (x_T)
	\end{bmatrix},
\end{split}
\end{equation}
from which it follows that $Y = \hat{\theta}\transp \Phi + W$, where $Y$ and $\Phi$ are known, and $\hat{\theta}$ and $W$ are unknown. We make the following assumption about the noise model:

\begin{assumption}[Noise model {\cite[Assumption~1]{Eising2023}}]\label{assump:noise model}
	We assume that the noise samples $W$ satisfy $W\transp \in \mathcal{Z}(\Pi)$, where $\Pi \in \mathbb{S}^{n + T}$ is such that $\Pi_{22} < 0$ and $\Pi | \Pi_{22} \geq 0$, and $\Pi$ is fixed and known. \hl{We also assume that $T \geq n$, such that $\Phi$ is of full column rank.}
\end{assumption}

Under Assumption~\ref{assump:noise model}, the set of parameters consistent with measurements $Y$ can be expressed as
\begin{equation}
	\Theta = \{ \theta \in \mathbb{R}^{k \times n} : \exists W\transp \in \mathcal{Z}(\Pi), Y = \theta\transp \Phi + W \},
\end{equation}
from which we can write
\begin{equation*}
	Y = \theta\transp \Phi + W \Leftrightarrow
	\begin{bmatrix}
		I_n & W	
	\end{bmatrix}
	=
	\begin{bmatrix}
		I_n & \theta\transp	
	\end{bmatrix}
	\begin{bmatrix}
		I_n & Y \\
		0 & -\Phi	
	\end{bmatrix},
\end{equation*}
from which it follows that $\Theta = \mathcal{Z}(N)$, where $N$ is defined as
\begin{equation}\label{eq:N definition}
	N :=
	\begin{bmatrix}
		I_n & Y \\
		0 & -\Phi	
	\end{bmatrix}
	\Pi
	\begin{bmatrix}
		I_n & Y \\
		0 & -\Phi	
	\end{bmatrix}\transp.
\end{equation}

Hence, we can obtain a least-squares estimate of $\hat{\phi}(x)$ as $\phi^{\mathrm{lse}}(x) = (\theta^{\mathrm{lse}})\transp b(x) = -N_{12} N_{22}\pinv b(x)$ \cite[Remark~2.2]{Eising2023}. \hil{Note that this differs from classical least squares, where the effect of a noise set $\mathcal{Z}(\Pi)$ is not taken into account, leading to biased and incorrect estimates of the system dynamics. The result presented here explicitly accounts for the noise set, thereby providing a consistent framework that allows us to quantify the effect on noise on the identified system's properties, including contraction as shown next.}

%

We now \hl{state} a preliminary result on data-driven strict contractivity from \cite{Eising2023} in terms of the \emph{local one-sided Lipschitz constant}:

\begin{definition}[Local One-sided Lipschitz Constant]\label{def:loc osLip}
	For a function $f : \mathbb{R}^n \to \mathbb{R}^n$, a matrix $P > 0$, and a compact set $K \in \mathcal{K}(R^n)$, $f$ is \emph{one-sided Lipschitz with respect to} $\Vert \cdot \Vert_{2, P^{1/2}}$ if there exists $L$ such that, for all $x, x' \in K$,
	\begin{equation}\label{eq:local osLip}
		(x - x')\transp P (f(x) - f(x')) \leq L \Vert x - x' \Vert_{2, P^{1/2}}^2,
	\end{equation}
	where the smallest such $L$ is denoted by $\oslip_K (f)$.
\end{definition}

\begin{remark}
	Note that $\oslip_K (f)$ is not necessarily positive unlike the classical Lipschitz constant, e.g., for $P = I$ and $f(x) = -x$, we have $\oslip_K (f) = -1$ for any $K \in \mathcal{K}(\mathbb{R}^n)$. Indeed, $f$ is \emph{strictly contracting} at rate $|\gamma|$ if $\oslip_K (f) < \gamma$, where $\gamma < 0$; $f$ is said to be \emph{strictly expanding} if $\gamma > 0$.
\end{remark}

Based on Definition~\ref{def:loc osLip}, we can formulate a local robust contractivity result based on the data-driven least-square estimate of the dynamics. \hil{It is important to note that this result will play a key role in the remainder of this paper, since it establishes a criterion that guarantees mutual contraction for systems that are identified in a data-driven fashion under noise.}

\begin{proposition}[Local Data-driven Contractivity]\label{prop:local data-driven contractivity}
	Given data $(Y, \Phi)$, let $\Theta = \mathcal{Z}(N)$. Assume $\Phi$ has full \hl{column} rank. Given a set $K \in \mathcal{K}(\mathbb{R}^n)$ and $P > 0$, we have
	$
		(x - x')\transp P \theta\transp (b(x) - b(x')) \leq \gamma \Vert x - x' \Vert_{2, P^{1/2}}^2
	$
	for all $\theta \in \Theta$ and $x, x' \in K$ if
	\begin{equation}
		\oslip_K (f) < \gamma - L_K \sqrt{\lambda_{\max} (N|N_{22})} \frac{\lambda_{\max} (P)}{\lambda_{\min} (P)},
	\end{equation}
	where $L_K$ is such that $\Vert b(x) - b(x') \Vert_{2, (-N_{22})^{-1/2}} \leq L_K \Vert x - x' \Vert_2$ for all $x, x' \in K$.
\end{proposition}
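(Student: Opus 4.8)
The plan is to fix $\theta \in \Theta$ and $x,x' \in K$, decompose the vector field as $\phi_\theta = \phi_{\hat\theta} + \phi_{\theta-\hat\theta}$ around the true parameter $\hat\theta$, and bound the two contributions separately. Since the true noise obeys $W\transp \in \mathcal{Z}(\Pi)$ by Assumption~\ref{assump:noise model}, we have $\hat\theta \in \Theta = \mathcal{Z}(N)$, and the $\hat\theta$-part of the left-hand side is exactly $(x-x')\transp P(f(x)-f(x'))$, which is at most $\oslip_K(f)\,\Vert x-x'\Vert_{2,P^{1/2}}^2$ by Definition~\ref{def:loc osLip}. It therefore remains to bound the perturbation term $(x-x')\transp P(\theta-\hat\theta)\transp(b(x)-b(x'))$, uniformly over $\theta \in \Theta$, by $L_K\sqrt{\lambda_{\max}(N|N_{22})}\,\frac{\lambda_{\max}(P)}{\lambda_{\min}(P)}\Vert x-x'\Vert_{2,P^{1/2}}^2$; summing the two estimates and using the hypothesis $\oslip_K(f) < \gamma - L_K\sqrt{\lambda_{\max}(N|N_{22})}\frac{\lambda_{\max}(P)}{\lambda_{\min}(P)}$ then closes the argument.

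The heart of the proof is turning the implicit QMI description $\theta \in \mathcal{Z}(N)$ into a usable norm bound. Under Assumption~\ref{assump:noise model} and the construction \eqref{eq:N definition} of $N$, the sign conditions on $\Pi$ and the full-column-rank hypothesis on $\Phi$ together force $N_{22} < 0$ and $N|N_{22} \geq 0$. Completing the square in $\begin{bmatrix} I_n \\ \theta \end{bmatrix}\transp N \begin{bmatrix} I_n \\ \theta \end{bmatrix} \geq 0$ then gives $(\theta-\theta^{\mathrm{lse}})\transp(-N_{22})(\theta-\theta^{\mathrm{lse}}) \leq N|N_{22}$, where $\theta^{\mathrm{lse}} = -N_{22}\pinv N_{21}$ is the least-squares parameter of \cite[Remark~2.2]{Eising2023}. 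Evaluating this matrix inequality on an arbitrary vector and bounding its right-hand side by $\lambda_{\max}(N|N_{22})$ yields $\Vert(-N_{22})^{1/2}(\theta-\theta^{\mathrm{lse}})\Vert_2 \leq \sqrt{\lambda_{\max}(N|N_{22})}$; since $\hat\theta$, like every element of $\Theta$, obeys this bound, the triangle inequality controls $\Vert(-N_{22})^{1/2}(\theta-\hat\theta)\Vert_2$ by a multiple of $\sqrt{\lambda_{\max}(N|N_{22})}$ (matching the exact constant in the statement amounts to tracking deviations from the least-squares center $\theta^{\mathrm{lse}}$ rather than from $\hat\theta$, since $\theta^{\mathrm{lse}}\in\Theta$ as well).

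The perturbation term is then finished by Cauchy--Schwarz and submultiplicativity: writing $\beta := b(x)-b(x')$, one has $(x-x')\transp P(\theta-\hat\theta)\transp\beta \leq \lambda_{\max}(P)\,\Vert x-x'\Vert_2\,\Vert(\theta-\hat\theta)\transp\beta\Vert_2$, and factoring $(\theta-\hat\theta)\transp\beta = \bigl[(\theta-\hat\theta)\transp(-N_{22})^{1/2}\bigr]\bigl[(-N_{22})^{-1/2}\beta\bigr]$ lets us apply the bound of the previous paragraph together with $\Vert(-N_{22})^{-1/2}\beta\Vert_2 = \Vert b(x)-b(x')\Vert_{2,(-N_{22})^{-1/2}} \leq L_K\Vert x-x'\Vert_2$ and $\Vert x-x'\Vert_2^2 \leq \lambda_{\min}(P)\inv\Vert x-x'\Vert_{2,P^{1/2}}^2$; collecting the constants produces the claimed bound $L_K\sqrt{\lambda_{\max}(N|N_{22})}\frac{\lambda_{\max}(P)}{\lambda_{\min}(P)}\Vert x-x'\Vert_{2,P^{1/2}}^2$.

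The main obstacle is the second step — extracting the finite, dimension-free radius $\sqrt{\lambda_{\max}(N|N_{22})}$ from the QMI set $\Theta$ — which is exactly where Assumption~\ref{assump:noise model} and the rank condition on $\Phi$ do their work, by guaranteeing $N_{22}<0$ and $N|N_{22}\geq0$ so that the Schur-complement completion of squares is valid and $(-N_{22})^{-1/2}$ (hence $L_K$) is well-defined. The remaining ingredients — the split around $\hat\theta$, the one-sided Lipschitz estimate from Definition~\ref{def:loc osLip}, Cauchy--Schwarz, submultiplicativity, and the conditioning factor $\lambda_{\max}(P)/\lambda_{\min}(P)$ — are routine; the only delicate bookkeeping is keeping the constant multiplying $\sqrt{\lambda_{\max}(N|N_{22})}$ in line with the statement, which is where one must be careful about whether deviations are measured from $\hat\theta$ or from the least-squares center $\theta^{\mathrm{lse}}$.
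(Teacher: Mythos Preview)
Your approach is essentially the same as the paper's: the paper's proof is a one-line citation to \cite[Cor.~4.2]{Eising2023}, localized via Definition~\ref{def:loc osLip}, and what you have written is precisely a self-contained reconstruction of that corollary --- split $\phi_\theta$ around a center in $\Theta$, control the center term by the one-sided Lipschitz constant, and control the perturbation via the Schur-complement radius $\sqrt{\lambda_{\max}(N|N_{22})}$ of the QMI ellipsoid together with Cauchy--Schwarz and the $P$-conditioning factor.

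The one point you flag but do not close is real, and it is not merely bookkeeping: centering at $\hat\theta$ (so that the first term is literally $\oslip_K(f)$, as the statement reads) forces a triangle-inequality detour through $\theta^{\mathrm{lse}}$ and yields $2\sqrt{\lambda_{\max}(N|N_{22})}$, whereas centering at $\theta^{\mathrm{lse}}$ gives the stated constant but replaces $\oslip_K(f)$ by $\oslip_K(\phi^{\mathrm{lse}})$. The paper's own Remark following the proposition and its subsequent use in Theorem~\ref{thm:contracting controller given est state} (where $\oslip_K(\phi^{\mathrm{lse}})$ appears) indicate that the intended center is $\theta^{\mathrm{lse}}$; with that reading your argument goes through verbatim with the exact constant, and your hedging about $\hat\theta$ versus $\theta^{\mathrm{lse}}$ is the correct diagnosis of an ambiguity in the statement rather than a gap in your proof.
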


\begin{proof}
	The proof follows by \cite[Cor.~4.2]{Eising2023}, which we extended to the case of local strict contractivity using local one-sided Lipschitz constants from Definition~\ref{def:loc osLip}.
\end{proof}

\begin{remark}
	Proposition~\ref{prop:local data-driven contractivity} implies that systems with dynamics $\dot{x} = \theta\transp b(x)$ for any $\theta \in \Theta$ robustly contracts with rate $|\gamma|$ for $\gamma$ for all $x \in K$, based on a condition on $\phi^{\mathrm{lse}}(x) = (\theta^{\mathrm{lse}})\transp b(x)$. \hili{The assumption that $\Phi$ is full rank is not restrictive in practice, as it demands persistent excitation in the observed trajectory $\{x_1, \ldots, x_T\}$, since constant $x$ will cause this assumption to break down. This is a very common assumption in system identification problems and adaptive control problems.}
\end{remark}

We now proceed by briefly introducing the notion of sum-of-squares programming, which will be the framework in which we obtain our controller synthesis approach.

\subsection{Sum-of-Squares Programming} We briefly define what it means for a polynomial to be a sum of squares, as well as what we mean by sum-of-squares (SOS) programs, a class of convex optimization problems solvable using semidefinite programming techniques:

\begin{definition}[Sum-of-Squares {\cite[Def.~3]{Zakeri2014}}]
	A polynomial $p(x)$ is said to be a \emph{sum of squares} (SOS) if there exist polynomials $q_i (x)$ such that $p(x) = \sum_i q_i^2(x)$.
	Let $\mathcal{S}_n$ denote the set of all SOS polynomials, and $\mathcal{P}_n$ the set of all polynomials from $\mathbb{R}^n$ to $\mathbb{R}^n$. 
\end{definition}

\begin{remark}
	Note that $p(x)$ being SOS implies $p(x) \geq 0$ for all $x$, but the inverse does not hold in general (see T.~S. Motzkin's famous inequality \cite{Motzkin1967}).
\end{remark}

\begin{definition}[Sum-of-Squares Program {\cite[Def.~4]{Zakeri2014}}]
	A sum-of-squares program is a convex optimization program of the following form:
	
	Given a weighting vector $w$ and polynomials $\{a_{i,j}\}_{i,j} \subseteq \mathcal{P}_n$, find $l$ polynomials $\{p_{i}\}_{i=1}^{\hat{l}} \subseteq \mathcal{P}_n$ and $\{p_{i}\}_{i=\hat{l}+1}^{l} \subseteq \mathcal{S}_n$ with a vector of coefficients $c$ that minimizes $w\transp c = \sum_{i=1}^l w_i c_i$
	such that the equality constraints $a_{0,j} + \sum_{i=1}^l \hl{p_i} (x) a_{i,j} (x) = 0$,
	for $j = 1,\ldots, \hat{k}$ and the SOS constraints $a_{0,j} + \sum_{i=1}^l \hl{p_i} (x) a_{i,j} (x) \in \mathcal{S}_n$
	for $j = \hat{k}+1, \ldots, k$ are satisfied.
\end{definition}

In what follows, due to space constraints and for ease of exposition, we will consider \emph{polynomial} control-affine systems of the following form:
\begin{equation*}
	\dot{x}(t) = f(x(t)) + B u(t), \quad x(0) = x_0 \in \mathbb{R}^n, u \in U \subseteq \mathbb{R}^m,
\end{equation*}
where $f(x)$ lies in the span of chosen polynomial basis functions $b(x)$, i.e., there exists $\hat{\theta} \in \Theta$ with $f(x) = \hat{\theta}\transp b(x)$. We can for instance take monomials up to a known order to construct $b(x)$. Most of our results can be extended to a scenario where $B$ is not constant; such a case will be the subject of a future publication. We now consider the problem of data-driven \emph{robustly contracting} controller synthesis.

\section{Data-driven Contracting Controller Synthesis}

We first show a data-driven robust contraction criterion based on a given fixed full-state feedback gain $G \in \mathbb{R}^{m \times n}$. Robustness in this case arises from the fact we account for errors in estimation due to noise $W$.

Before stating such a result, we emphasize that we do not have access to the true state in the low-fidelity data regime. To account for this discrepancy, we study the difference between a feedback control signal generated by the true state $u_* = G_* x_*$ and by an erroneous state estimate $u = G_* x$. This state estimation error results in an additional robustness criterion regarding discrepant feedback signals, complementing the results above which account for estimation errors in the dynamics. \hl{Data $(Y, \Phi)$ is only gathered in the high-fidelity regime; the control scheme proposed in this section will then be active during the low-fidelity data regime.}

We also introduce an additional stability constraint, restricting the class of feedback gains to those produced by an as of yet unknown map $\mathrm{fdbk}_K$, which takes the estimate of the free dynamics $\phi^{\mathrm{lse}}$ and produces a set of admissible stabilizing feedback gains on the set $K$. Later in this work, we shall explicitly show how to obtain constraints that result in $G$ being a stabilizing controller.

\begin{theorem}\label{thm:contracting controller given est state}
	Given data $(Y, \Phi)$, let $\Theta = \mathcal{Z}(N)$. Let $P > 0$ and $K \in \mathcal{KC}(\mathbb{R}^n)$. \hl{Assume that Assumption~\ref{assump:noise model} holds} and assume that in \eqref{eq:nom sys}, $g(x) = B$ for some known $B \in \mathbb{R}^{n \times m}$ for all $x \in K$. 
Let
	\begin{equation}\label{eq:L_K estimate}
		L_K := \sqrt{-1/\lambda_{\max} (N_{22})} \max_{x \in K} \Vert \DD b(x) \Vert_2.
	\end{equation}
	
	Given a desired contraction rate $\gamma < 0$ and
	\begin{equation}\label{eq:gamma star}
		G_* := \arg\min_{G \in \mathrm{fdbk}_K (\phi^{\mathrm{lse}})} \lambda_{\max} \left( \frac{P B G P\inv + G\transp B\transp}{2} \right),
	\end{equation}
	if $\mu_* := \lambda_{\max} \left( \frac{P B G_* P\inv + G_*\transp B\transp}{2} \right)$ is such that 
	\begin{equation*}
		\mu_* < \gamma - L_K \sqrt{\lambda_{\max} (N|N_{22})} \frac{\lambda_{\max} (P)}{\lambda_{\min} (P)} - \ell - \oslip_K (\phi^{\mathrm{lse}}),
	\end{equation*}
	where $\ell := \sqrt{\lambda_{\max}(P)} |\lambda_{\max} (B G_*)|$,
	then the system $\dot{x} = f(x) + BG_* \bar{x}$ is locally contracting on $K$ at rate $|\gamma|$ for any \hl{potentially erroneous state estimate} $\bar{x} \in K$.
\end{theorem}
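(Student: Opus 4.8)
The plan is to prove contraction directly from the definition: fix any two closed-loop trajectories $x(\cdot),x'(\cdot)$ of $\dot x = f(x) + BG_*\bar x$, each evolving under its own (possibly erroneous) state estimate $\bar x(\cdot),\bar x'(\cdot)\in K$, consider the $P$-weighted incremental Lyapunov function $V := \Vert x - x'\Vert_{2,P^{1/2}}^2 = (x-x')\transp P(x-x')$, show $\tfrac12\dot V \le \gamma V$ along the trajectories, and invoke Gr\"onwall--Bellman (exactly as in the preliminaries) to obtain $\Vert x(t)-x'(t)\Vert_{2,P^{1/2}} \le \Vert x(0)-x'(0)\Vert_{2,P^{1/2}}e^{\gamma t}$, i.e. contraction on $K$ at rate $|\gamma|$. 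Substituting the dynamics and $f = \hat\theta\transp b$ for the (unknown but, by standing assumption, existing) $\hat\theta\in\Theta$ splits the derivative as
\[
  \tfrac12\dot V \;=\; \underbrace{(x-x')\transp P\,\hat\theta\transp\big(b(x)-b(x')\big)}_{=:\,D} \;+\; \underbrace{(x-x')\transp P B G_*(\bar x - \bar x')}_{=:\,C},
\]
and the two terms are bounded separately.

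For $D$ I would apply Proposition~\ref{prop:local data-driven contractivity} with $\theta=\hat\theta\in\Theta$. First I would check that the $L_K$ of \eqref{eq:L_K estimate} is admissible there: convexity of $K$ and the mean value inequality give $\Vert b(x)-b(x')\Vert_2 \le (\max_{z\in K}\Vert\DD b(z)\Vert_2)\Vert x-x'\Vert_2$, while $-N_{22}>0$ by Assumption~\ref{assump:noise model} gives $\Vert v\Vert_{2,(-N_{22})^{-1/2}} \le \sqrt{-1/\lambda_{\max}(N_{22})}\,\Vert v\Vert_2$; composing these reproduces \eqref{eq:L_K estimate}. Proposition~\ref{prop:local data-driven contractivity}, applied for every $\gamma'$ above $\oslip_K(\phi^{\mathrm{lse}}) + L_K\sqrt{\lambda_{\max}(N|N_{22})}\tfrac{\lambda_{\max}(P)}{\lambda_{\min}(P)}$ and then passed to the infimum over $\gamma'$ (using $V\ge 0$), yields $D \le \big(\oslip_K(\phi^{\mathrm{lse}}) + L_K\sqrt{\lambda_{\max}(N|N_{22})}\tfrac{\lambda_{\max}(P)}{\lambda_{\min}(P)}\big)V$.

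For $C$ I would split the erroneous feedback into its ideal part and the estimation discrepancy, $G_*\bar x = G_*x + G_*(\bar x - x)$, so that $C = (x-x')\transp PBG_*(x-x') + (x-x')\transp PBG_*\big[(\bar x-x)-(\bar x'-x')\big]$. With $z:=P^{1/2}(x-x')$, the first summand equals $z\transp P^{1/2}BG_*P^{-1/2}z \le \lambda_{\max}\!\big(\tfrac{P^{1/2}BG_*P^{-1/2}+P^{-1/2}G_*\transp B\transp P^{1/2}}{2}\big)\Vert z\Vert_2^2 = \mu_* V$, where the identification with $\mu_*$ uses \eqref{eq:weighted log norm} together with the fact that $\tfrac{PBG_*P\inv+G_*\transp B\transp}{2}$ and its symmetrized, $P^{1/2}$-conjugated version are similar and hence share $\lambda_{\max}$. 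The second summand is the ``discrepant feedback'' contribution: Cauchy--Schwarz and sub-multiplicativity bound it by $\Vert z\Vert_2\sqrt{\lambda_{\max}(P)}\,\Vert BG_*\Vert\,\Vert(\bar x-x)-(\bar x'-x')\Vert_2$, and since all of $\bar x,\bar x',x,x'$ lie in $K$ the estimation-discrepancy vector is dominated by $\Vert x-x'\Vert_{2,P^{1/2}}$, so this summand is at most $\ell V$ with $\ell=\sqrt{\lambda_{\max}(P)}\,|\lambda_{\max}(BG_*)|$. Adding the bounds on $D$ and $C$ gives $\tfrac12\dot V \le \big(\oslip_K(\phi^{\mathrm{lse}}) + L_K\sqrt{\lambda_{\max}(N|N_{22})}\tfrac{\lambda_{\max}(P)}{\lambda_{\min}(P)} + \mu_* + \ell\big)V < \gamma V$ by the hypothesis on $\mu_*$, and Gr\"onwall--Bellman closes the argument; that $G_*$ in \eqref{eq:gamma star} minimizes $\mu_*$ over $\mathrm{fdbk}_K(\phi^{\mathrm{lse}})$ only makes this sufficient condition as tight as possible.

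The step I expect to be the main obstacle is the ``discrepant feedback'' summand in the bound on $C$: making rigorous, uniformly over admissible estimates $\bar x\in K$, that the state-estimation error enters $\tfrac12\dot V$ as at most $\ell V$. This hinges on the precise standing hypotheses on the estimation map (that the inter-trajectory estimate deviation is no larger than the weighted state deviation on $K$) and on the operator-norm bookkeeping linking $\Vert BG_*\Vert$, $|\lambda_{\max}(BG_*)|$ and the $P$-weighting; the remaining steps, including uniformity over $\hat\theta\in\Theta$, are routine once Proposition~\ref{prop:local data-driven contractivity} is in hand.
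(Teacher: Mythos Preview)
Your decomposition and bounds mirror the paper's proof: Proposition~\ref{prop:local data-driven contractivity} for the drift term $D$ (with the same mean-value derivation of $L_K$), the weighted log norm \eqref{eq:weighted log norm} for the ideal-feedback part of $C$, and subadditivity of the one-sided Lipschitz constant (Prop.~27 of \cite{Davydov2022}) to combine them---done there abstractly via $\oslip$, by you via the explicit incremental Lyapunov function $V$. The one place the paper proceeds differently is precisely the obstacle you flag. It does not try to bound $(\bar x-x)-(\bar x'-x')$ against $x-x'$; instead it rewrites $BG_*\bar x = BG_*x + Bv$ with $v=G_*(\bar x-x)\in G_*\mathcal{B}^n_{\mathrm{diam}(K)/2}$ (since $x,\bar x\in K$), treats $v$ as a bounded exogenous disturbance to the nominal closed loop $\dot x=f(x)+BG_*x$, and invokes Theorem~38 of \cite{Davydov2022} on contraction under additive inputs. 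The $\ell$ term then enters as the Lipschitz-in-input constant via $\|BG_*x-BG_*\bar x\|_{2,P^{1/2}}\le\ell\,\|x-\bar x\|_2$, rather than as a coefficient multiplying $V$ in a two-trajectory estimate; this sidesteps the need for any hypothesis linking inter-trajectory estimate deviation to state deviation.
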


\begin{proof}
The proof of Theorem \ref{thm:contracting controller given est state} is chiefly based on Proposition~\ref{prop:local data-driven contractivity} and subadditivity of the one-sided Lipschitz constant \cite[Prop.~27]{Davydov2022}. 

To obtain $L_K$, we find the following estimate on $\Vert x \Vert_{2, P^{1/2}}$:
	\begin{equation*}
		\Vert x \Vert_{2, P^{1/2}}^2= (P^{1/2} x)\transp (P^{1/2} x) = x\transp P x \leq \lambda_{\max} (P) \Vert x \Vert_2^2.
	\end{equation*}
	The resulting expression for $L_K$, \eqref{eq:L_K estimate}, then follows by consideration of the mean-value theorem on a compact set $K$ based on the norm of the derivative of the basis function $b(x)$ (see \cite[\S 6.1]{Eising2023}). 
	Then we consider Proposition~27 and Theorem~38 of \cite{Davydov2022}, which introduces the additional one-sided Lipschitz bound $\ell = \sqrt{\lambda_{\max}(P)} |\lambda_{\max} (B G_*)|$ that influences the contraction rate:
	\begin{equation*}
		\Vert f(x) + B G_* x - (f(x) + BG_* \bar{x}) \Vert_{2, P^{1/2}} \leq \ell \Vert x - \bar{x} \Vert_2,
	\end{equation*}
	where we have taken $x, \bar{x} \in K$. In this instance, $\bar{x}$ can be considered the (potentially incorrect) full-state estimate obtain through a set-based observer scheme, where the guaranteed reachable set is taken as $K$ or a subset thereof. Note that the free dynamics are based on the unknown true state $x$, even when the feedback controller is based on the potentially false state estimate $\bar{x}$. Finally, $G_*$ is obtained by evaluating $\arg\min_{G \in \mathrm{fdbk}_K (\phi^{\mathrm{lse}})} \oslip_K (B G)$, which reduces to \eqref{eq:gamma star} by \eqref{eq:weighted log norm}.
\end{proof}

\begin{remark}
\hl{The use of a feedback law incorporating a potentially erroneous state estimate $\bar{x} \in K$ leads to $u = G_* \bar{x}$. Since both $x$ and $\bar{x}$ belong to $K$, it is possible to make the system $\dot{x} = f(x) + B G_* \bar{x}$ contracting, simply by considering $\dot{x} = f(x) + B G_* \bar{x} = f(x) + B G_* x + B v$, where $v := G_* (x - \bar{x}) \in G_* \mathcal{B}^n_{\mathrm{diam(K)/2}}$. Doing so is equivalent to considering an additive disturbance $v$ to the unknown nominal control input $u = G_* x$, in which case the conditions for contractivity follow as shown in Theorem~38 of \cite{Davydov2022}.}
	
	
	\hili{$G_*$ can be obtained based on a sum-of-squares (SOS) program, which we develop for a particular case in Theorem~\ref{thm:main synthesis sos}.}
\end{remark}

Based on these sufficient conditions for robust contraction under estimation-based loop closure, we proceed by developing an SOS program to synthesize robustly contracting and stabilizing feedback controllers based on the data acquired.

\section{Data-driven Contracting and Stabilizing Feedback Controller Synthesis}

Having obtained sufficient conditions for robust contraction as presented in Theorem~\ref{thm:contracting controller given est state}, our goal is now to construct a computationally tractable algorithm that is \emph{not dependent on nonlinear optimization}. We will show that it is possible construct an admissible feedback controller based on sum-of-squares (SOS) techniques. A robust proportional--integral controller synthesis technique for control-affine systems with bounded parametric uncertainties in given in \cite{Zakeri2014}, where the resulting sum-of-squares program can efficiently be solved; this work did not, however, consider contraction guarantees, and is not data-driven in nature. Classically, SOS problems are solved using equivalent semidefinite program (SDP) formulations. In recent years, techniques such as \emph{(scaled) diagonally dominant sum-of-squares} ((S)DSOS) have emerged \cite{Ahmadi2019}, enabling efficient online solution of SOS programs with tight control on execution time and error bounds, as well as domain specification for local optimization problems.

Before we present the algorithm, we raise a number of key results relating the conditions shown above to equivalent sum-of-squares constraints. First, we provide a sum-of-squares program to obtain the local one-sided Lipschitz bound of a polynomial function.

\begin{proposition}[Sum-of-squares Program for One-sided Lipschitz Bound]\label{prop:sos oslip}
	For a continuously differentiable polynomial function $f : \mathbb{R}^n \to \mathbb{R}^n$, matrix $P > 0$, and a set $K \in \mathcal{KC}(\mathbb{R}^n)$, $\oslip_K (f)$ is given by \hl{
	\begin{equation}
	\begin{split}
		\oslip_K (f) &= \min_\nu \nu \ \mathrm{s.t.} \\
		z_2\transp (2 \nu P - (P \DD f (z_1) &+ \DD\transp f(z_1) P\transp)) z_2 \in \mathcal{S},
	\end{split}
	\end{equation}
	where $z = [\begin{smallmatrix} z_1\transp & z_2\transp \end{smallmatrix}]\transp \in K \times \mathcal{B}_1^n$.}
\end{proposition}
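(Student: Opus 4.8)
The plan is to show the two directions of the claimed equality by relating the one-sided Lipschitz inequality \eqref{eq:local osLip} to a pointwise quadratic-form condition via the mean value theorem, and then to encode that pointwise condition as a sum-of-squares membership on the compact set $K \times \mathcal{B}_1^n$. First I would note that for a continuously differentiable $f$ and any $x, x' \in K$ with $K$ convex, writing $\Delta = x - x'$ and using $f(x) - f(x') = \int_0^1 \DD f(x' + s\Delta)\, \Delta \, \dd s$, the defining inequality
\[
	\Delta\transp P (f(x) - f(x')) \leq L\, \Delta\transp P \Delta
\]
is equivalent to $\int_0^1 \Delta\transp \tfrac{1}{2}(P \DD f(x' + s\Delta) + \DD\transp f(x' + s\Delta) P\transp)\Delta \, \dd s \leq L\, \Delta\transp P \Delta$. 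The key reduction is that this holds for all $x,x' \in K$ if and only if the pointwise inequality $v\transp (P\DD f(z_1) + \DD\transp f(z_1)P\transp) v \leq 2L\, v\transp P v$ holds for all $z_1 \in K$ and all $v \in \mathbb{R}^n$: the "if" direction is immediate by integrating, and the "only if" direction follows by taking $x' \to x$ along direction $v$ (so that the averaged Hessian collapses to its value at $z_1 = x$), which is where convexity of $K$ (so that the segment stays in $K$) and continuity of $\DD f$ are used. Since the quadratic form is homogeneous in $v$, restricting $v$ to the closed unit ball $\mathcal{B}_1^n$ loses nothing. Hence $\oslip_K(f)$ is the least $\nu$ such that $q_\nu(z) := z_2\transp(2\nu P - (P\DD f(z_1) + \DD\transp f(z_1) P\transp)) z_2 \geq 0$ for all $z = [z_1\transp\ z_2\transp]\transp \in K \times \mathcal{B}_1^n$, matching the displayed minimization except with nonnegativity in place of SOS membership.

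Next I would address the gap between "$q_\nu \geq 0$ on $K \times \mathcal{B}_1^n$" and "$q_\nu \in \mathcal{S}$", i.e., $q_\nu$ being globally SOS. Strictly, nonnegativity on a basic semialgebraic set is captured by a Positivstellensatz / Putinar-type certificate with SOS multipliers for the constraints describing $K \times \mathcal{B}_1^n$, not by plain SOS membership. I expect the intended reading — consistent with the SOS-program framework set up earlier and with \cite{Zakeri2014} — is that the multiplier terms are absorbed into the constraint polynomials $a_{i,j}$, so "$\in \mathcal{S}$" here abbreviates the constrained SOS relaxation over $K \times \mathcal{B}_1^n$; I would state this explicitly, invoking Putinar's representation and compactness of $K \times \mathcal{B}_1^n$ to argue the relaxation is asymptotically exact (and exact in the cases handled, e.g. when $q_\nu$ is a quadratic form in $z_2$ whose coefficient matrix must be PSD on $K$, the localizing certificate is linear in the constraint polynomials). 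The degree of the SOS multipliers is a design parameter; at sufficiently high degree the minimizing $\nu$ of the relaxation converges to $\oslip_K(f)$.

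The main obstacle is precisely this last point: plain SOS membership is in general only a sufficient condition for the pointwise inequality, so the claimed \emph{equality} $\oslip_K(f) = \min_\nu \nu$ holds exactly only under the constrained-SOS interpretation (with adequate multiplier degree) or when the structure of the problem makes the relaxation tight. I would therefore frame the proof as: (i) establish the exact equivalence $\oslip_K(f) = \inf\{\nu : q_\nu \geq 0 \text{ on } K \times \mathcal{B}_1^n\}$ rigorously via the mean value / averaged-Hessian argument above, and (ii) invoke the standard SOS-on-compact-semialgebraic-sets machinery to replace nonnegativity by the SOS condition, noting the (asymptotic) equivalence. Everything else — the integral representation, the homogeneity reduction to $\mathcal{B}_1^n$, the symmetrization $P\DD f + \DD\transp f P\transp$ — is routine.
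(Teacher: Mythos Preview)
Your approach is correct and aligns with the paper's own proof, which simply invokes the Demidovi\v{c} condition (citing \cite{Demidovic1961} and \cite[Table~1]{Davydov2022}) stating that $\nu \geq \oslip_K(f)$ if and only if $P\DD f(x) + \DD\transp f(x) P\transp \leq 2\nu P$ on $K$, and then recasts this as a constrained SOS problem over $K \times \mathcal{B}_1^n$ \cite{Ahmadi2019}. The difference is that you re-derive this equivalence from first principles via the integral mean-value representation and the limit $x' \to x$, rather than citing it as a known result; this is more self-contained and makes explicit where convexity of $K$ and continuity of $\DD f$ enter. You also go further than the paper in addressing the gap between pointwise nonnegativity on $K \times \mathcal{B}_1^n$ and SOS membership, invoking a Putinar-type certificate --- the paper only remarks that this is a ``constrained SOS problem'' and treats the recasting as an equivalence without further discussion. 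Your caveat that the claimed equality holds exactly only under the constrained-SOS interpretation (with adequate multiplier degree) is a valid refinement that the paper glosses over.
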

\begin{proof}
	The proof follows by the Demidovi\v{c} condition for one-sided weighted Lipschitzness \cite{Demidovic1961}, which reads $\nu \geq \oslip (f)$ if and only if
	$
		P \DD f(x) + \DD\transp f(x) P\transp \leq 2 \nu P
	$ \cite[Table~1]{Davydov2022}.
	We then recast this condition into an equivalent sum-of-squares constraint, completing the proof. \hl{This sum-of-squares constraint needs only to hold on $K \times \mathcal{B}_1^n$ as opposed to $\mathbb{R}^{2n}$, forming a constrained SOS problem \cite{Ahmadi2019}.}
\end{proof}

We aim to construct a sum-of-squares program that produces a robust contracting and stabilizing fixed-gain feedback controller. Such a synthesis approach will be computationally tractable, while allowing for data-driven robust stabilization
. We first cite a key result from \cite{Rantzer2001} concerning a dual to Lyapunov's stability theorem, which enables the formulation of sum-of-squares constraints to certify stability:

\begin{lemma}[{\cite[Thm.~1]{Rantzer2001}}]\label{lm:Rantzer Lyap}
	Given $\dot{x}(t) = f(x(t))$, where $f \in \mathcal{C}^1 (\mathbb{R}^n, \mathbb{R}^n)$ and $f(0) = 0$, suppose there exists a nonnegative $\rho \in \mathcal{C}^1 (\mathbb{R}^n \setminus \{0\}, \mathbb{R})$ such that $\frac{\rho(x) f(x)}{\Vert x \Vert}$ is integrable on $\{ x \in \mathbb{R}^n : \Vert x \Vert \geq 1 \}$ and the divergence of $\rho(x) f(x)$,
	$
		[\DD \cdot (\hl{\rho} \cdot f)] (x) > 0, \ x \ \mathrm{almost \ everywhere}.
	$
	Then, for almost all initial states $x(0)$ the trajectory $x(t)$ exists for $t \in [0, \infty)$ and tends to zero as $t \to \infty$.
\end{lemma}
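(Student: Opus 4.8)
The plan is to prove the lemma by a transport (Liouville) argument: the hypothesis that $\rho f$ has positive divergence says exactly that the flow of $\dot{x}=f(x)$ is \emph{strictly volume-expanding with respect to the measure $\rho(x)\,\dd x$}, and this is incompatible with the existence of a positive-Lebesgue-measure set of initial conditions whose forward orbits stay forever in a region bounded away from the origin. I would then combine this with the integrability hypothesis at infinity to rule out escape, and conclude that almost every trajectory is attracted to $0$.

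First I would fix notation. Let $\phi_t$ denote the (local) flow of $\dot{x}=f(x)$; since $f\in\mathcal{C}^1$, $\phi_t$ is, where defined, a $\mathcal{C}^1$ diffeomorphism, so it maps Lebesgue-null sets to null sets, and Liouville's formula gives $\frac{\dd}{\dd t}\det \DD\phi_t(y)=(\DD\cdot f)(\phi_t(y))\,\det\DD\phi_t(y)$. Combining this with the change of variables $x=\phi_t(y)$ yields the key transport identity: for any bounded measurable $S$ whose forward orbit $\bigcup_{t\in[0,\tau]}\phi_t(S)$ stays in a compact subset of $\mathbb{R}^n\setminus\{0\}$ (where $\rho$ is $\mathcal{C}^1$),
\[
\frac{\dd}{\dd t}\int_{\phi_t(S)}\rho(x)\,\dd x=\int_{\phi_t(S)}[\DD\cdot(\rho f)](x)\,\dd x\;\ge\;0,
\]
with strict inequality for every $t$ as soon as $|S|>0$, because $\DD\cdot(\rho f)>0$ almost everywhere and $|\phi_t(S)|>0$. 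Hence $\int_{\phi_\tau(S)}\rho\,\dd x>\int_{S}\rho\,\dd x$ whenever $|S|>0$.

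The core step is then the following. Fix $0<\epsilon<R$ and set $N_\epsilon^R:=\{x_0:\phi_t(x_0)\text{ exists and }\epsilon\le\|\phi_t(x_0)\|\le R\ \forall t\ge 0\}$. This set is forward-invariant, $\phi_\tau(N_\epsilon^R)\subseteq N_\epsilon^R$, its orbit lies in the compact annulus $\{\epsilon\le\|x\|\le R\}$ on which $\rho$ is bounded, and $\int_{N_\epsilon^R}\rho\,\dd x<\infty$. If $|N_\epsilon^R|>0$, the transport identity forces $\int_{\phi_\tau(N_\epsilon^R)}\rho\,\dd x>\int_{N_\epsilon^R}\rho\,\dd x$, whereas $\rho\ge0$ together with $\phi_\tau(N_\epsilon^R)\subseteq N_\epsilon^R$ forces $\int_{\phi_\tau(N_\epsilon^R)}\rho\,\dd x\le\int_{N_\epsilon^R}\rho\,\dd x$ — a contradiction. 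So $|N_\epsilon^R|=0$ for all $\epsilon,R$. A parallel flux estimate through the spheres $\|x\|=R$, using integrability of $\rho f/\|x\|$ on $\{\|x\|\ge1\}$ to bound the outward $\rho$-flux, shows that the set of $x_0$ with unbounded forward orbit is also null, so a.e. trajectory exists for all $t\ge0$ and is bounded. Taking a countable union over $\epsilon\searrow0$, $R\nearrow\infty$, and re-running the argument from each rational time $s$ (legitimate since $\phi_s$ preserves null sets), I conclude that $\liminf_{t\to\infty}\|\phi_t(x_0)\|=0$ for almost every $x_0$.

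The main obstacle is the final upgrade from this recurrence-near-the-origin to genuine convergence $\phi_t(x_0)\to0$, together with the careful treatment of the two ``boundaries'': the origin, where $\rho$ need not be defined and the transport identity is unavailable, and infinity, where the integrability hypothesis is exactly what prevents mass from leaking out. For the convergence upgrade I would apply the same volume-expansion argument on thin annuli $\{\epsilon_1\le\|x\|\le\epsilon_2\}$ to show that the set of initial conditions whose trajectory enters $\mathcal{B}^n_{\epsilon_1}$ but later exits $\mathcal{B}^n_{\epsilon_2}$ is null; combined with $\liminf_{t\to\infty}\|\phi_t(x_0)\|=0$, this traps the trajectory inside $\mathcal{B}^n_{\epsilon_2}$ from some time on, for every $\epsilon_2>0$, which is precisely $\lim_{t\to\infty}\|\phi_t(x_0)\|=0$. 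Everything else is routine ODE existence theory and measure theory; all the leverage comes from the one-line observation that positive divergence of $\rho f$ makes $\rho$-volume strictly increase along the flow, contradicting forward-invariance of any finite-$\rho$-mass set away from the origin.
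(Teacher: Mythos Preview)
The paper does not supply its own proof of this lemma: it is stated with the citation \cite[Thm.~1]{Rantzer2001} and used as a black box in the proof of Theorem~\ref{thm:main synthesis sos}, so there is no in-paper argument to compare against. Your proposal is essentially Rantzer's original Liouville/transport argument --- positive divergence of $\rho f$ makes the $\rho$-weighted volume strictly increase along the flow, which is incompatible with forward invariance of any finite-$\rho$-mass set bounded away from the origin, and the integrability condition on $\rho f/\Vert x\Vert$ controls escape to infinity --- and that is the correct and standard route. The two places you flag as delicate (handling the singularity of $\rho$ at the origin, and the flux argument at infinity via the integrability hypothesis) are exactly the points where Rantzer's proof does the real work, so your identification of the obstacles is accurate; the upgrade from $\liminf$ to $\lim$ via the annulus argument is also in the spirit of the original.
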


	\hil{One of the key innovations presented by our approach is an alternative approach to robustness. Classically, robust controller synthesis algorithms explicitly account for parameter uncertainty or exogenous disturbances, often demanding large-scale optimization problems that operate on coarse disturbance models. 
    Instead, in our approach, we only need the nominal system to be stabilized by the synthesized controller. Robustness stability is endowed on the class of all off-nominal systems under consideration through robust contraction towards a stabilized nominal system trajectory. 
    }
\hil{We provide a brief theorem that shows that a nominally stabilizing controller that is robustly contracting is also a robust stabilizing controller, following the discussion given in the previous remark.}

\begin{theorem}
\label{thm:robust contraction implies robust stability}
	\hil{Consider a system of the form $\dot{x}(t) = f_{\mathrm{nom}} (x(t), u(t))$, where $f$ is a continuous function in $x \in X \subseteq \mathbb{R}^n$ and $u \in U \subseteq \mathbb{R}^n$. Consider a set of possible dynamics $F := \{f_{\mathrm{nom}}\} + \mathcal{B}_1^{\Vert \cdot \Vert_\infty} \subseteq C^0(X \times U, \mathbb{R}^n)$. Let $u(t) = G x(t)$ be a controller that exponentially stabilizes $\dot{x}_{\mathrm{nom}}(t) = f_{\mathrm{nom}} (x_{\mathrm{nom}}(t), G x_{\mathrm{nom}}(t))$ at a rate $\alpha > 0$, such that
	\begin{equation*}
		\Vert x_{\mathrm{nom}}(t) \Vert \leq \Vert x_{\mathrm{nom}}(0) \Vert \exp(-\alpha t).
	\end{equation*}
	Let the control law $u(t) = G x(t)$ also be such that it is robustly contracting at a rate $\gamma > 0$ for all dynamics given by $\dot{x}(t) = f (x(t), u(t))$ for $f \in F$, i.e.,
	\begin{equation*}
		\oslip_X (f(\cdot, G \cdot)) \leq -\gamma < 0.
	\end{equation*}
	Then, for any $f \in F$, the system $\dot{x}(t) = f (x(t), u(t))$ with $u(t) = G x(t)$ is exponentially stable at a rate $\min\{\alpha, \gamma\}$, i.e.,
	\begin{equation*}
		\Vert x(t) \Vert \leq \left( \Vert x(0) - x_{\mathrm{nom}}(0) \Vert + \Vert x(0) \Vert \right) \exp(-\min\{\alpha, \gamma\} t).
	\end{equation*}
    }
\end{theorem}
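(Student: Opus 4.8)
The plan is to bound $\Vert x(t) \Vert$ via the triangle inequality $\Vert x(t) \Vert \leq \Vert x(t) - x_{\mathrm{nom}}(t) \Vert + \Vert x_{\mathrm{nom}}(t) \Vert$, and then control each of the two terms separately. The second term is immediate from the hypothesis: $\Vert x_{\mathrm{nom}}(t) \Vert \leq \Vert x_{\mathrm{nom}}(0) \Vert \exp(-\alpha t)$. For the first term, the idea is to view $x(t)$ and $x_{\mathrm{nom}}(t)$ as two trajectories of the \emph{same} closed-loop vector field $h(x) := f(x, Gx)$ (note that $x_{\mathrm{nom}}$ solves $\dot x_{\mathrm{nom}} = f_{\mathrm{nom}}(x_{\mathrm{nom}}, G x_{\mathrm{nom}})$, which equals $h(x_{\mathrm{nom}})$ only if $f = f_{\mathrm{nom}}$; this is the subtlety I flag below). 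Assuming this is resolved, since $\oslip_X(h) \leq -\gamma < 0$, the standard one-sided-Lipschitz / Gr\"onwall argument (as invoked in the preliminaries via \cite[Thm.~29]{Davydov2022}) gives $\Vert x(t) - x_{\mathrm{nom}}(t) \Vert \leq \Vert x(0) - x_{\mathrm{nom}}(0) \Vert \exp(-\gamma t)$.

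Concretely, I would compute $\tfrac{\dd}{\dd t} \tfrac12 \Vert x(t) - x_{\mathrm{nom}}(t) \Vert^2 = (x - x_{\mathrm{nom}})\transp (h(x) - h(x_{\mathrm{nom}})) \leq -\gamma \Vert x - x_{\mathrm{nom}} \Vert^2$ by the definition of $\oslip_X$ (with $P = I$), and then integrate to get the exponential decay. Combining with the bound on $\Vert x_{\mathrm{nom}}(t) \Vert$ and using $\exp(-\gamma t) \leq \exp(-\min\{\alpha,\gamma\} t)$ and $\exp(-\alpha t) \leq \exp(-\min\{\alpha,\gamma\} t)$ yields
\begin{equation*}
	\Vert x(t) \Vert \leq \left( \Vert x(0) - x_{\mathrm{nom}}(0) \Vert + \Vert x_{\mathrm{nom}}(0) \Vert \right) \exp(-\min\{\alpha,\gamma\} t),
\end{equation*}
which is the claimed estimate (reading $\Vert x(0) \Vert$ in the statement as $\Vert x_{\mathrm{nom}}(0) \Vert$, or invoking $x_{\mathrm{nom}}(0) = x(0)$ if the nominal trajectory is initialized at the true state).

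The main obstacle is the mismatch between the vector field whose contraction we know and the vector field $x_{\mathrm{nom}}$ actually satisfies. The robust contraction hypothesis says $\oslip_X(f(\cdot, G\cdot)) \leq -\gamma$ for \emph{every} $f \in F$, so for the particular perturbed $f$ of interest, $x(t)$ is a trajectory of $h = f(\cdot, G\cdot)$ with $\oslip_X(h)\leq-\gamma$; but $x_{\mathrm{nom}}(t)$ is a trajectory of $f_{\mathrm{nom}}(\cdot,G\cdot)$, a \emph{different} field. The clean fix is to instead compare $x(t)$ with the solution $\tilde x(t)$ of $\dot{\tilde x} = h(\tilde x)$, $\tilde x(0) = x_{\mathrm{nom}}(0)$ — contraction of $h$ gives $\Vert x(t) - \tilde x(t) \Vert \leq \Vert x(0) - x_{\mathrm{nom}}(0)\Vert \exp(-\gamma t)$ — and note that since $f_{\mathrm{nom}} \in F$ as well, we may as well take $f = f_{\mathrm{nom}}$, i.e. assume $h = f_{\mathrm{nom}}(\cdot, G\cdot)$ so that $\tilde x = x_{\mathrm{nom}}$; the perturbed case then follows by the same inequality applied with the perturbed $h$ and the observation that $x_{\mathrm{nom}}$ is itself within the contraction ball of that $h$'s flow through comparison with $\tilde x$. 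I would streamline the statement's exposition to make this chain explicit, since as written it implicitly identifies two solutions of two different ODEs.
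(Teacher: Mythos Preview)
Your core approach coincides with the paper's: split $\Vert x(t)\Vert \le \Vert x(t)-x_{\mathrm{nom}}(t)\Vert + \Vert x_{\mathrm{nom}}(t)\Vert$, bound the second term by the exponential-stability hypothesis, bound the first by contraction, and merge the two rates via $\min\{\alpha,\gamma\}$. The paper's proof does exactly this, simply asserting $\Vert x_{\mathrm{nom}}(t)-x(t)\Vert \le \Vert x_{\mathrm{nom}}(0)-x(0)\Vert \exp(-\gamma t)$ as a consequence of ``robust contraction'' and then replacing $\Vert x_{\mathrm{nom}}(0)\Vert$ by $\Vert x(0)\Vert$ in the final line (the discrepancy you noticed is present verbatim in the paper).

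The subtlety you flag is genuine, and the paper does not address it either: the hypothesis $\oslip_X(f(\cdot,G\cdot))\le -\gamma$ controls the distance between two solutions of the \emph{same} closed-loop field $h=f(\cdot,G\cdot)$, whereas $x_{\mathrm{nom}}$ solves $\dot{x}_{\mathrm{nom}}=f_{\mathrm{nom}}(x_{\mathrm{nom}},Gx_{\mathrm{nom}})$, a different field whenever $f\neq f_{\mathrm{nom}}$. Your auxiliary-trajectory fix does not close the gap---you obtain $\Vert x-\tilde x\Vert\le\Vert x(0)-x_{\mathrm{nom}}(0)\Vert \exp(-\gamma t)$ but are then left needing $\Vert\tilde x-x_{\mathrm{nom}}\Vert$ small, which again compares trajectories of two distinct vector fields. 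The honest differential inequality is
\[
\frac{\dd}{\dd t}\Vert x-x_{\mathrm{nom}}\Vert \le -\gamma\,\Vert x-x_{\mathrm{nom}}\Vert + \sup_{y\in X}\Vert f(y,Gy)-f_{\mathrm{nom}}(y,Gy)\Vert,
\]
which yields exponential decay toward an offset of order $\Vert f-f_{\mathrm{nom}}\Vert_\infty/\gamma$, not the pure exponential decay claimed. So your diagnosis is correct; the paper's own proof makes the same unjustified leap you identified.
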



\begin{proof}
	\hilir{Robust contraction at a rate $\gamma$ gives the following bound on trajectory convergence:
	\begin{equation*}
		\Vert x_{\mathrm{nom}}(t) - x(t) \Vert \leq \Vert x_{\mathrm{nom}}(0) - x(0) \Vert \exp(-\gamma t).
	\end{equation*}
	Since we are interested in exponential convergence of $x(t)$ to the origin, we find
	\begin{equation*}
	\begin{split}
		\Vert x(t) \Vert &= \Vert x(t) - x_{\mathrm{nom}}(t) + x_{\mathrm{nom}}(t) \Vert \\
		&\leq \Vert x(t) - x_{\mathrm{nom}}(t) \Vert + \Vert x_{\mathrm{nom}}(t) \Vert \\
		&\leq \Vert x(0) - x_{\mathrm{nom}}(0) \Vert \exp(-\gamma t) + \Vert x_{\mathrm{nom}}(0) \Vert \exp(-\alpha t) \\
		&\leq \left( \Vert x(0) - x_{\mathrm{nom}}(0) \Vert + \Vert x(0) \Vert \right) \exp(-\min\{\alpha, \gamma\} t),
	\end{split}
	\end{equation*}
	where we have applied the triangle inequality in conjunction with the exponential stability of $x_{\mathrm{nom}}(t)$ and the robust contraction of $x(t)$ to $x_{\mathrm{nom}}(t)$.}
\end{proof}

\begin{corollary}
	\hilir{We can show that the above result also holds in its converse form after a slight strengthening of the robust stability property. Indeed, one can directly show that robust \emph{exponential} stability implies robust contraction using a similar proof approach.}
\end{corollary}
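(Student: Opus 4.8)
The plan is to make the phrase ``slight strengthening'' precise and then simply run the chain of inequalities in the proof of Theorem~\ref{thm:robust contraction implies robust stability} in reverse. Concretely, I would strengthen \emph{robust exponential stability to the origin} to \emph{robust incremental exponential stability}: for every $f \in F$ and every pair of forward solutions $x(\cdot), x'(\cdot)$ of $\dot{x} = f(x, Gx)$ with initial data in $X$, one has $\Vert x(t) - x'(t) \Vert_{2, P^{1/2}} \leq \Vert x(0) - x'(0) \Vert_{2, P^{1/2}} \exp(-\gamma t)$ for $t \geq 0$, with no overshoot constant in the chosen weighted Euclidean norm. I would first observe that when the closed loop $f(\cdot, G\cdot)$ is linear this is not a strengthening at all, since $x - x'$ is then itself a closed-loop trajectory and the incremental bound follows from the stability bound already assumed in Theorem~\ref{thm:robust contraction implies robust stability}; this is what makes the added hypothesis ``slight.''

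Next, fix $f \in F$ and two solutions with $x(0) = a$, $x'(0) = b \in X$ (assuming $X$, or a sublevel set of the contraction metric, is forward invariant under the closed loop, so that the incremental bound actually applies along these trajectories). Since $f$ is continuous, $x(\cdot)$ and $x'(\cdot)$ are $C^1$ in $t$, so $\varphi(t) := \tfrac{1}{2}\Vert x(t) - x'(t) \Vert_{2,P^{1/2}}^2 = \tfrac{1}{2}(x(t)-x'(t))\transp P (x(t)-x'(t))$ has right derivative at $t = 0$ equal to $(a - b)\transp P\bigl(f(a, Ga) - f(b, Gb)\bigr)$, using the ODE and the identity $\Vert v \Vert_{2,P^{1/2}}^2 = v\transp P v$. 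On the other hand, the incremental bound gives $\varphi(t) \leq \varphi(0)\exp(-2\gamma t)$, hence the right derivative of $\varphi$ at $0$ is at most $-2\gamma\varphi(0) = -\gamma\Vert a - b \Vert_{2,P^{1/2}}^2$. Comparing the two expressions yields $(a - b)\transp P\bigl(f(a, Ga) - f(b, Gb)\bigr) \leq -\gamma\Vert a - b \Vert_{2,P^{1/2}}^2$ for all $a, b \in X$. By Definition~\ref{def:loc osLip} this is exactly $\oslip_X(f(\cdot, G\cdot)) \leq -\gamma < 0$, and since $f \in F$ was arbitrary, $u = Gx$ is robustly contracting; this is the converse of Theorem~\ref{thm:robust contraction implies robust stability}.

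I expect the main obstacle to be precisely the issue that forces the hypothesis to be strengthened, in two intertwined parts. First, the overshoot constant: if robust exponential stability only comes with $\Vert x(t) - x'(t)\Vert \leq c\Vert x(0)-x'(0)\Vert\exp(-\gamma t)$ for some $c > 1$, the pointwise right-derivative argument above only delivers $\oslip_X(f(\cdot,G\cdot)) \leq 0$, and recovering a \emph{strictly} negative one-sided Lipschitz constant then requires changing the weighting matrix $P$ via a converse-Lyapunov construction (e.g.\ a quadratic storage function in the linear case), which is no longer a ``slight'' modification; hence the clean statement presumes $c = 1$ in an appropriately chosen weighted norm. Second, the domain issue: the incremental decay bound is only hypothesized along trajectories that remain in $X$, so one must either assume forward invariance of $X$ (or of a metric sublevel set) or localize the conclusion, exactly as $K$-localization is handled for $\oslip_K$ elsewhere in the paper. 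Differentiability of $f$ is not an obstacle, since continuity of $f$ already makes solutions $C^1$ in $t$ and the one-sided Lipschitz inequality never differentiates $f$ itself.
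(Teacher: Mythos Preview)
Your argument is correct, and in fact more detailed than anything the paper provides: the corollary is stated without proof, merely asserting that ``one can directly show that robust exponential stability implies robust contraction using a similar proof approach.'' There is no proof in the paper to compare against.

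Your reading of the ``slight strengthening'' as \emph{incremental} exponential stability with unit overshoot in the weighted norm is a reasonable way to make the statement precise, and your right-derivative computation correctly recovers the one-sided Lipschitz bound from it. One point worth flagging explicitly: incremental exponential stability without overshoot in a fixed weighted Euclidean norm is, by the very argument you give, \emph{equivalent} to the Demidovi\v{c} condition $\oslip_X \leq -\gamma$; under your interpretation the corollary therefore collapses to a tautology rather than a genuine converse, which is likely why the paper hedges with ``slight strengthening'' instead of stating a clean biconditional. The paper's hint at ``a similar proof approach'' most plausibly points to a triangle-inequality manipulation mirroring the forward direction (bounding $\Vert x(t)-x'(t)\Vert$ by $\Vert x(t)\Vert + \Vert x'(t)\Vert$ and invoking uniform exponential decay to the origin for each term), but as you correctly diagnose in your discussion of the overshoot constant, that route alone does not deliver a strictly negative $\oslip$ without precisely the kind of additional hypothesis you impose. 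Your treatment of the forward-invariance caveat is also apt and not addressed in the paper.
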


We may now proceed by formulating a constrained sum-of-squares program to synthesize a robust contracting and stabilizing feedback controller for a class of affine-in-control polynomial systems: 

\begin{theorem}\label{thm:main synthesis sos}
\hl{Assume that the hypotheses of Theorem~\ref{thm:contracting controller given est state} hold.} Let $\mathrm{span} \ K = \mathbb{R}^n$ and let $B \in \mathbb{R}^{n \times m}$ of full column rank. Let the least squares estimate of $f$ based on data $(Y, \Phi)$ be given as $\phi^{\mathrm{lse}}$. Let $p$ be an arbitrary positive-definite polynomial over $K$ with values in $\mathbb{R}_+$, and $\alpha > 0$ a sufficiently large constant. Let $\oslip_K (\phi^{\mathrm{lse}})$ be obtained as in Proposition~\ref{prop:sos oslip}.

For a given desired contraction rate $\gamma < 0$, consider $\Gamma \in \mathbb{S}^n$ that solves the following constrained SOS program:
	\begin{align*}\tag{OPT}\label{eq:optimization problem}
		\min_{\mu, a, \Gamma} &\ \mu \ \mathrm{s.t.} \\
		\mu < \gamma/\beta - \varsigma/\beta, \quad \Gamma &= \Gamma\transp, \quad a > 0,
	\end{align*}
	\begin{equation}\tag{SOS-I}\label{eq:SOS-I}
	x\transp \left(\mu I - (P \Gamma P\inv + P\itransp \Gamma P\transp)/2\right) x \in \mathcal{S},
	\end{equation}
	\begin{equation}\tag{SOS-II}\label{eq:SOS-II}
		p(x) \DD \cdot (a \phi^{\mathrm{lse}}(x) + a \Gamma x)
		- \alpha \DD p(x) \cdot (a \phi^{\mathrm{lse}}(x) + a \Gamma x) \in \mathcal{S}
	\end{equation}
	for all $x \in K$, where $\beta := 1 + \sqrt{\lambda_{\max}(P)}$ and
	\begin{equation}\label{eq:sig}
		\varsigma := \left(-\frac{\lambda_{\max}(N | N_{22})}{\lambda_{\max} (N_{22})}\right)^{1/2} \max_{x \in K} \Vert \DD b(x) \Vert_2 + \oslip_K (\phi^{\mathrm{lse}}).
	\end{equation}
	If such a $\Gamma$ exists, then
	the resulting feedback law $u = G \hat{x}$ for $\hat{x} \in K$, where $G = B\pinv \Gamma / a$, is stabilizing and contracting at rate $|\gamma|$ on $K$ \hil{by Theorem~\ref{thm:robust contraction implies robust stability}}.
\end{theorem}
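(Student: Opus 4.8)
The plan is to decompose the assertion into its two named properties — robust contraction at rate $|\gamma|$ on $K$, and stabilization — deriving the former from \eqref{eq:SOS-I} together with Theorem~\ref{thm:contracting controller given est state}, the latter (via nominal stability) from \eqref{eq:SOS-II} together with Rantzer's density criterion, Lemma~\ref{lm:Rantzer Lyap}, and then combining the two through Theorem~\ref{thm:robust contraction implies robust stability}. As a preliminary step I would set $G_* := B\pinv\Gamma/a$; understanding $\Gamma$ to be parameterized so that $\operatorname{range}\Gamma \subseteq \operatorname{range}B$ (which is automatic when $B$ is square, and can otherwise be imposed as a linear constraint), the realized closed-loop gain satisfies $BG_* = \Gamma/a$, exactly the matrix entering \eqref{eq:gamma star}.

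For the contraction part, the first observation is that \eqref{eq:SOS-I}, being an SOS certificate of a quadratic form in $x$ over all of $\mathbb{R}^n$, is equivalent to the matrix inequality $\mu I_n \succeq \tfrac12\big(P\Gamma P\inv + P\itransp\Gamma P\transp\big) = \operatorname{Sym}\!\big(P\,(BG_*)\,P\inv\big)$, which by the weighted-log-norm identity \eqref{eq:weighted log norm} reads $\mu \ge \mu_{2,P^{1/2}}(BG_*) = \lambda_{\max}\!\big(\tfrac12(PBG_*P\inv + G_*\transp B\transp)\big)$; hence at optimality $\mu$ equals the quantity $\mu_*$ of Theorem~\ref{thm:contracting controller given est state} attached to $G_*$. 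What remains is to check that the scalar constraint $\mu < \gamma/\beta - \varsigma/\beta$ is, after rearrangement, exactly the hypothesis of Theorem~\ref{thm:contracting controller given est state}: unpacking \eqref{eq:sig} gives $\varsigma = L_K\sqrt{\lambda_{\max}(N|N_{22})} + \oslip_K(\phi^{\mathrm{lse}})$ with $L_K$ as in \eqref{eq:L_K estimate}, and multiplying through by $\beta = 1 + \sqrt{\lambda_{\max}(P)}$ peels off a term $\sqrt{\lambda_{\max}(P)}\,\mu$ that absorbs the discrepant-feedback penalty $\ell = \sqrt{\lambda_{\max}(P)}\,|\lambda_{\max}(BG_*)|$, since $|\lambda_{\max}(BG_*)| \le \mu_* = \mu$ for the symmetric, negative-semidefinite $BG_* = \Gamma/a$ at a feasible point. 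With the hypothesis of Theorem~\ref{thm:contracting controller given est state} verified, $\dot x = f(x) + BG_*\bar x$ is contracting at rate $|\gamma|$ on $K$ for every $f$ consistent with $(Y,\Phi)$ and every (possibly erroneous) estimate $\bar x \in K$.

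For nominal stability, I would use that $p > 0$ and $a > 0$ on $K$, so that \eqref{eq:SOS-II} enforces $p(x)\,[\DD\cdot f_N](x) - \alpha\,[\DD p(x)]\cdot f_N(x) > 0$ on $K$, which is an SOS reformulation of the Rantzer condition $\DD\cdot(\rho f_N) > 0$ for the density $\rho := p^{-\alpha}$ and the nominal closed-loop field $f_N := \phi^{\mathrm{lse}} + (BG_*)(\cdot)$, up to the positive scalar factors $a$ and $p^{-\alpha-1}$ that do not affect the sign; indeed $\DD\cdot(p^{-\alpha}f_N) = p^{-\alpha-1}\big(p\,\DD\cdot f_N - \alpha\,\DD p\cdot f_N\big)$. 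Given $\phi^{\mathrm{lse}}(0)=0$, $\rho \ge 0$, and $\alpha$ large enough that $\rho f_N/\Vert\cdot\Vert$ is integrable at infinity, Lemma~\ref{lm:Rantzer Lyap} gives convergence of almost every nominal trajectory to the origin; strict positivity of the divergence on the compact $K$ together with positive-definiteness of $p$ then upgrades this to exponential decay at some rate $\alpha' > 0$, as needed to invoke Theorem~\ref{thm:robust contraction implies robust stability}. Combining the two parts and applying that theorem yields that the feedback $u = G\hat x$ with $G = B\pinv\Gamma/a$ is stabilizing and contracting at rate $|\gamma|$ on $K$.

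The step I expect to be the main obstacle is the contraction argument — verifying that feasibility of \eqref{eq:SOS-I} plus the scalar bound $\mu < (\gamma-\varsigma)/\beta$ reproduce the hypothesis of Theorem~\ref{thm:contracting controller given est state} verbatim. This hinges on reading \eqref{eq:SOS-I} as the identity \eqref{eq:weighted log norm} in disguise and on the constant bookkeeping that folds $\ell$ and the conditioning factor $\lambda_{\max}(P)/\lambda_{\min}(P)$ into $\beta$ and $\varsigma$. A secondary difficulty is discharging the technical hypotheses of Rantzer's lemma — $\phi^{\mathrm{lse}}(0) = 0$, nonnegativity of $\rho$, and integrability of $\rho f_N/\Vert\cdot\Vert$ at infinity — which is exactly where the requirements that $\alpha$ be ``sufficiently large'' and $p$ be positive-definite enter, together with the upgrade from Rantzer's almost-everywhere convergence to the exponential rate demanded by Theorem~\ref{thm:robust contraction implies robust stability}.
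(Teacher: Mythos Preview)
Your proposal is correct and follows essentially the same route as the paper: \eqref{eq:SOS-I} together with the scalar bound $\mu<(\gamma-\varsigma)/\beta$ is used to verify the hypothesis of Theorem~\ref{thm:contracting controller given est state}, \eqref{eq:SOS-II} encodes Rantzer's divergence condition (Lemma~\ref{lm:Rantzer Lyap}) for the density $\rho = a\,p^{-\alpha}$ applied to the nominal closed loop $\phi^{\mathrm{lse}}+\Gamma(\cdot)$, and Theorem~\ref{thm:robust contraction implies robust stability} combines the two. Your version is in fact more explicit than the paper's about the constant bookkeeping (how $\beta$ absorbs the discrepant-feedback term $\ell$ and how $\varsigma$ packages $L_K\sqrt{\lambda_{\max}(N|N_{22})}+\oslip_K(\phi^{\mathrm{lse}})$), and you correctly flag technical points the paper leaves implicit: the range condition on $\Gamma$ needed for $BG_*=\Gamma/a$, the conditioning factor $\lambda_{\max}(P)/\lambda_{\min}(P)$, and the upgrade from Rantzer's almost-everywhere convergence to the exponential rate required by Theorem~\ref{thm:robust contraction implies robust stability}.
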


\begin{proof}
The proof follows from Theorem~\ref{thm:contracting controller given est state} and a modification of the proof of Theorem~3 of \cite{Zakeri2014}.

In brief, the first inequality constraint of \eqref{eq:optimization problem} ensures that the closed-loop system contracts at rate $|\gamma|$ on $K$, despite modeling uncertainty and state estimation error. Here, $\gamma$ is the desired contraction rate, and $\varsigma$ in \eqref{eq:sig} accounts for modeling uncertainty and the effect of the free dynamics; $\beta$ accounts for the state estimation error.

The first SOS constraint, \eqref{eq:SOS-I}, enforces the maximum eigenvalue constraint on $\Gamma$. The equality constraint of \eqref{eq:optimization problem} enforces symmetricity of $\Gamma$, such that, when combined with \eqref{eq:SOS-I},
	$
		\mu \geq \lambda_{\max} ((P \Gamma P\inv + \hl{\Gamma\transp})/2).
	$
	Indeed, the stipulation that $\mathrm{span} \ K = \mathbb{R}^n$ ensures that \eqref{eq:SOS-I} implies $v\transp \Gamma v \leq \mu v\transp v$ for all $v \in \mathbb{R}^n$, which in turn implies $\lambda_{\max} (\Gamma) \leq \mu$.
	
	The second SOS constraint, \eqref{eq:SOS-II}, ensures stability of the closed-loop system on $K$ on account of Lemma~\ref{lm:Rantzer Lyap}, \hl{where $\rho(x) = \frac{a}{p^\alpha (x)}$. The integrability constraint of Lemma~\ref{lm:Rantzer Lyap} implies that $a$ and $\alpha$ cannot be trivially taken to approach infinity.} If all conditions are satisfied, $G = B\pinv \Gamma/a$ produces a robustly stabilizing and contracting feedback gain, where robustness is with respect to both modeling errors and state estimation errors. It is trivial to show that $B\pinv$ is a left inverse if $B$ is of full column rank, as assumed in the hypothesis.
	
	Robustness to modeling errors follows by accounting for state estimation errors using the $\ell$ term in Theorem~\ref{thm:contracting controller given est state}. Despite the controller being synthesized based on $\phi^{\mathrm{lse}}$, robust contraction ensures stability on $K$, since all trajectories contract to the trajectory generated by the estimate system $\phi^{\mathrm{lse}}$. This observation completes the proof.
\end{proof}

\begin{remark}
	The constraint $\Gamma = \Gamma^T$ can be encoded in the optimization program by considering $\Gamma$ internally as represented by a triangular matrix on account of symmetricity, reducing the number of decision variables from $2+n^2$ to $2+n(n+1)/2$.
	Additionally, if $P = I$, then \eqref{eq:SOS-I} reduces to $x\transp (\mu I - P) x$. If $P$ is a diagonal matrix with nonzero elements on the diagonal, then $H := \frac{1}{2}(P \Gamma P\inv + \Gamma)$ is such that $\mathrm{diag}(H) = 0$ and $H_{i,j} = \frac{1}{2} (P_{i,i}/P_{jj} + 1)\Gamma_{i,j}$ for $i \neq j$.
\end{remark}

\begin{corollary}\label{cor:sos maximum contracting}
	Given the same hypotheses as in Theorem~\ref{thm:main synthesis sos}, consider the same SOS problem without the first inequality constraint involving $\mu$ in \eqref{eq:optimization problem}. The resulting gain $G$ is such that the closed-loop system is stable and contracting at rate $\gamma = \beta \mu + \varsigma$. In fact, the resulting closed-loop system is the least expansive ($\gamma > 0$) or most contractive ($\gamma < 0$) given the feedback structure considered. 
\end{corollary}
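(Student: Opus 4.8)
The plan is to observe that dropping the constraint $\mu < \gamma/\beta - \varsigma/\beta$ converts \eqref{eq:optimization problem} from a feasibility test for a prescribed target $\gamma$ into a bona fide minimization of $\mu$ over $(\mu, a, \Gamma)$ subject to $\Gamma = \Gamma\transp$, $a > 0$, \eqref{eq:SOS-I} and \eqref{eq:SOS-II}, with the realized rate read off a posteriori from the tight form of the discarded inequality, $\gamma = \beta\mu + \varsigma$. Because $\beta = 1 + \sqrt{\lambda_{\max}(P)} > 0$ and $\varsigma$ of \eqref{eq:sig} is a constant fixed by $(Y,\Phi)$, $P$ and $K$, the map $\mu \mapsto \beta\mu + \varsigma$ is strictly increasing, so minimizing $\mu$ is the same as minimizing $\gamma$; the claimed ``stable and contracting at rate $\gamma = \beta\mu + \varsigma$'' is then precisely the conclusion of Theorem~\ref{thm:main synthesis sos} read at the limiting boundary $\gamma = \beta\mu + \varsigma$.

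Next I would reuse the argument in the proof of Theorem~\ref{thm:main synthesis sos}: \eqref{eq:SOS-I} is a quadratic form in $x$, so its SOS condition coincides with positive semidefiniteness of its symmetric coefficient matrix, and together with $\Gamma = \Gamma\transp$ and $\mathrm{span}\ K = \mathbb{R}^n$ this is equivalent to $\mu \geq \lambda_{\max}\!\left((P\Gamma P\inv + P\itransp\Gamma P\transp)/2\right)$, i.e.\ $\mu$ bounds the largest eigenvalue of the symmetric part of $P\Gamma P\inv$. Hence for each feasible $(a,\Gamma)$ the least admissible $\mu$ equals that eigenvalue, which — writing $G = B\pinv\Gamma/a$ — is precisely the weighted logarithmic norm $\mu_*$ of $BG$ appearing in Theorem~\ref{thm:contracting controller given est state} and \eqref{eq:gamma star}. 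Therefore the modified program evaluates $\min \mu_*(BG)$ over exactly those $G$ whose associated $(\Gamma,a)$ satisfy \eqref{eq:SOS-II}, i.e.\ over the SOS-certifiably stabilizing gains $G \in \mathrm{fdbk}_K(\phi^{\mathrm{lse}})$ of the assumed linear structure.

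I would then note that \eqref{eq:SOS-II} is unchanged, so Lemma~\ref{lm:Rantzer Lyap} applied with $\rho(x) = a/p^\alpha(x)$ still certifies stability of the estimate closed loop $\dot x = \phi^{\mathrm{lse}}(x) + (\Gamma/a)\,x$, and Theorem~\ref{thm:contracting controller given est state}, invoked exactly as in the proof of Theorem~\ref{thm:main synthesis sos}, yields contraction of $\dot x = f(x) + BG\bar x$ on $K$ at rate $|\gamma|$ with $\gamma = \beta\mu + \varsigma$, while Theorem~\ref{thm:robust contraction implies robust stability} upgrades this to robust stability. Since the program drives $\mu$ to its minimum $\mu^\star$ over all admissible $(\Gamma,a)$ and $\gamma = \beta\mu + \varsigma$ is affine and strictly increasing in $\mu$, the realized $\gamma^\star = \beta\mu^\star + \varsigma$ is the smallest achievable within this feedback family: most negative (most contractive) when $\gamma^\star < 0$ and smallest positive (least expansive) when $\gamma^\star > 0$.

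The main obstacle I anticipate lies in the definite article of ``the least expansive / most contractive'': one must ensure the infimum over $(\Gamma,a)$ is attained and that it genuinely equals $\min_G \mu_*(BG)$ rather than a strictly larger relaxed value. Exactness is immediate for \eqref{eq:SOS-I} since it is degree two (SOS equals positive semidefiniteness there), so the only conservatism is the Positivstellensatz-type relaxation underlying \eqref{eq:SOS-II}, which is already present in Theorem~\ref{thm:main synthesis sos}; accordingly the optimality should be stated relative to the class of gains admitting such an SOS stability certificate. I would additionally bookkeep the scaling interplay between $a$ and $\Gamma$, together with the integrability side-condition of Lemma~\ref{lm:Rantzer Lyap} (which, as noted after Theorem~\ref{thm:main synthesis sos}, prevents $a$ and $\alpha$ from being sent to infinity), so that the minimized $\mu$ indeed tracks $\mu_*(BG)$ and the minimum is well posed; the remainder is routine bookkeeping on top of Theorems~\ref{thm:contracting controller given est state}, \ref{thm:main synthesis sos} and~\ref{thm:robust contraction implies robust stability}.
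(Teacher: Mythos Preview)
Your proposal is correct and follows precisely the line of argument the paper leaves implicit: the corollary is stated without proof, as an immediate consequence of Theorem~\ref{thm:main synthesis sos} obtained by dropping the constraint $\mu < \gamma/\beta - \varsigma/\beta$, minimizing $\mu$, and reading off the realized rate from the boundary $\gamma = \beta\mu + \varsigma$. Your elaboration on attainment, the exactness of \eqref{eq:SOS-I}, and the qualification that optimality is relative to gains admitting an SOS stability certificate via \eqref{eq:SOS-II} goes beyond what the paper spells out, but is consistent with and sharpens the intended reasoning.
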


In the results given above, we have assumed access to a set $K$ in which we would like the resulting controller to be contracting and stabilizing. Often, this set might equal an \emph{outer-approximate reachable set} \cite{El-Kebir2023}. Key considerations of its choice are over what time span such a reachable set should be valid, and how this time span relates to the synthesis procedure, e.g., on account of lag due to computational time. 

We now proceed by considering an example scenario of an aerial vehicle experiencing an unknown wind field.

%
%
%

\section{Application}

In this section, we consider a numerical example based on a simplified model of the planar dynamics of an unmanned aerial vehicle (UAV) flying through a partially GPS-denied area while being affected by an unknown lateral wind field. In a GPS-denied area, we assume that the instantaneous position readings are unreliable, for instance due to a MEMS inertial measurement unit with poor noise properties. The nominal dynamics are as follows:
\begin{equation}\label{eq:uav}
	\dot{\vec{x}} = \frac{\dd}{\dd t}
	\begin{bmatrix}
		x \\
		y \\
		\dot{x} \\
		\dot{y}
	\end{bmatrix}
	=
	\begin{bmatrix}
	\dot{x} \\
	\dot{y} \\
	c_d (\dot{x}^2 + (\dot{y} + v_w(x))^2) \\
	c_w v_w (x)
	\end{bmatrix}
	+
	\begin{bmatrix}
		0 & 0 \\
		0 & 0 \\
		1 & 0 \\
		0 & 1
	\end{bmatrix}
	\begin{bmatrix}
		u_1 \\
		u_2
	\end{bmatrix},
\end{equation}
or $\dot{\vec{x}} = f(\vec{x}) + B \vec{u}$, where $v_w$ is an unknown polynomial modeling the lateral wind field as a position of the longitudinal position. 

For the weighted norm, we consider $P = I$. To account for the domain constraint $\vec{x} \in K$ in Theorem~\ref{thm:main synthesis sos} it would suffice to use a local optimization approach based on (S)DSOS \cite{Ahmadi2019}, since past approaches admit only global SOS solution. We instead opt to use a constrained sequential least squares programming (SLSQP) routine with nonlinear constraints, which converges reasonably fast; future work will rigorously consider runtime metrics.

Since the approach given in Theorem~\ref{thm:main synthesis sos} can lead to a gain matrix $\Gamma$ with values of exceedingly large magnitude, we include an additional constraint bounding the magnitude of each value of $\Gamma$, $\max_{ij} |\Gamma_{i,j}| \leq M$, where we take $M = 10$.

We take $p(\vec{x}) = x^2 + y^2 + \dot{x}^2 + \dot{y}^2$, and set $\alpha = 1000$ to satisfy the hypothesis of Lemma~\ref{lm:Rantzer Lyap}. To demonstrate our approach, we consider $K := \{[\begin{smallmatrix} 10 & 10 & 1 & 0.1 \end{smallmatrix}]\transp\} + \mathcal{B}_1^n$. Then, from Proposition~\ref{prop:sos oslip}, we find $\oslip_K (f_{\mathrm{nom}}) = 0.190$, where $f_{\mathrm{nom}}$ is defined by $f(\vec{x})$ from System~\eqref{eq:uav} when $v_w = 0$.

Solving program \eqref{eq:optimization problem} based on the relaxation given in Corollary~\ref{cor:sos maximum contracting}, we obtain $\mu_* = -0.141$. In addition, we obtain gain matrix 
$G = \left[\begin{smallmatrix}
		-3.142 &  0.015 & -3.365 &  0.473 \\
        -0.017 & -3.179 &  0.473 & -3.365
	\end{smallmatrix}\right]$.

We now show the level of disturbance that the current approach can handle based on a given matrix $N$ from \eqref{eq:N definition}, where $\Theta = \mathcal{Z}(N)$. We find the following condition for guaranteed $\gamma$-contraction and stabilization on $K$:
\begin{equation*}
	\gamma > -0.092 + \sqrt{-\lambda_{\max}(N | N_{22})/\lambda_{\max} (N_{22})}  \max_{x \in K} \Vert \DD b(x) \Vert_2.
\end{equation*}

\hil{In our simulations, we consider a uniform wind noise $w_v \in [-1, 1]$. We take $c_d = c_w = 10^{-2}$; the nominal trajectory is such that $v_w = 0$. The initial condition for the nominal system is $x_0 = [\begin{smallmatrix} 10 & 10 & 1 & 0.1 \end{smallmatrix}]\transp$, while the initial condition for the off-nominal system lies in $K$. We proceed by expressing the trajectory deviation in terms of the infinity norm, which is governed by the same contraction bound presented above.}

The resulting closed-loop trajectory deviation between the nominal system, and 20 realizations of possible off-nominal closed-loop systems, is shown in Fig.~\ref{fig:contraction}. 
To explicitly show outperformance of the contraction guarantee by our controller, we present each of the $(x, y)$ positions of the nominal and off-nominal systems in Fig.~\ref{fig:trajectories}. Contraction and stability can both be observed, where the contraction rate is further elucidated in Fig.~\ref{fig:contraction}.

\begin{figure}[h]
	\centering
	\includegraphics[width=0.8\linewidth]{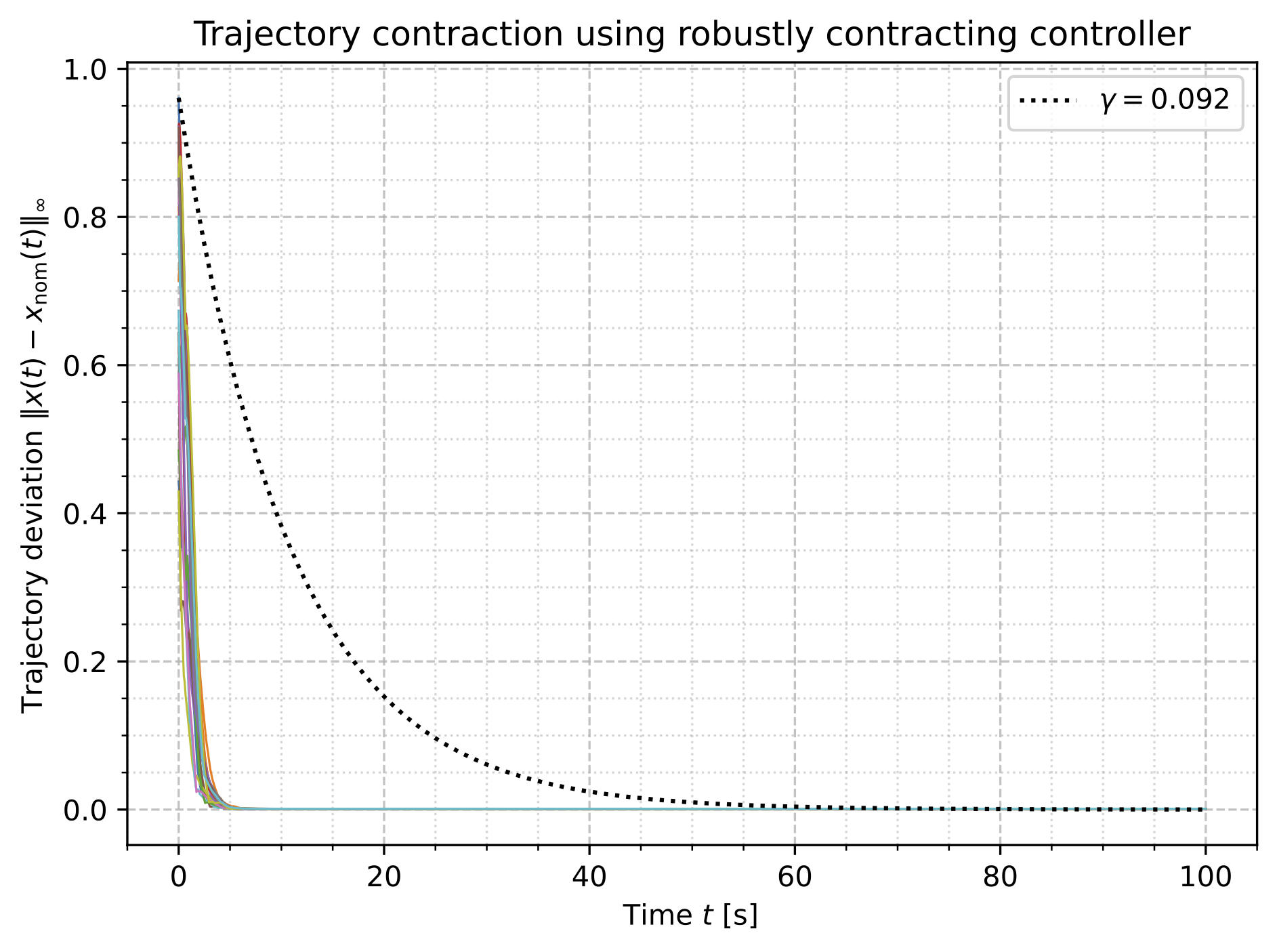}
	\caption{\hil{Trajectory deviation between the nominal trajectory and twenty realizations of the off-nominal system's trajectory with varying initial states. The theoretically guaranteed best contraction rate is honored and outperformed in practice using the method presented here.}}
	\label{fig:contraction}
\end{figure}

\begin{figure}[h]
	\centering
	\includegraphics[width=0.8\linewidth]{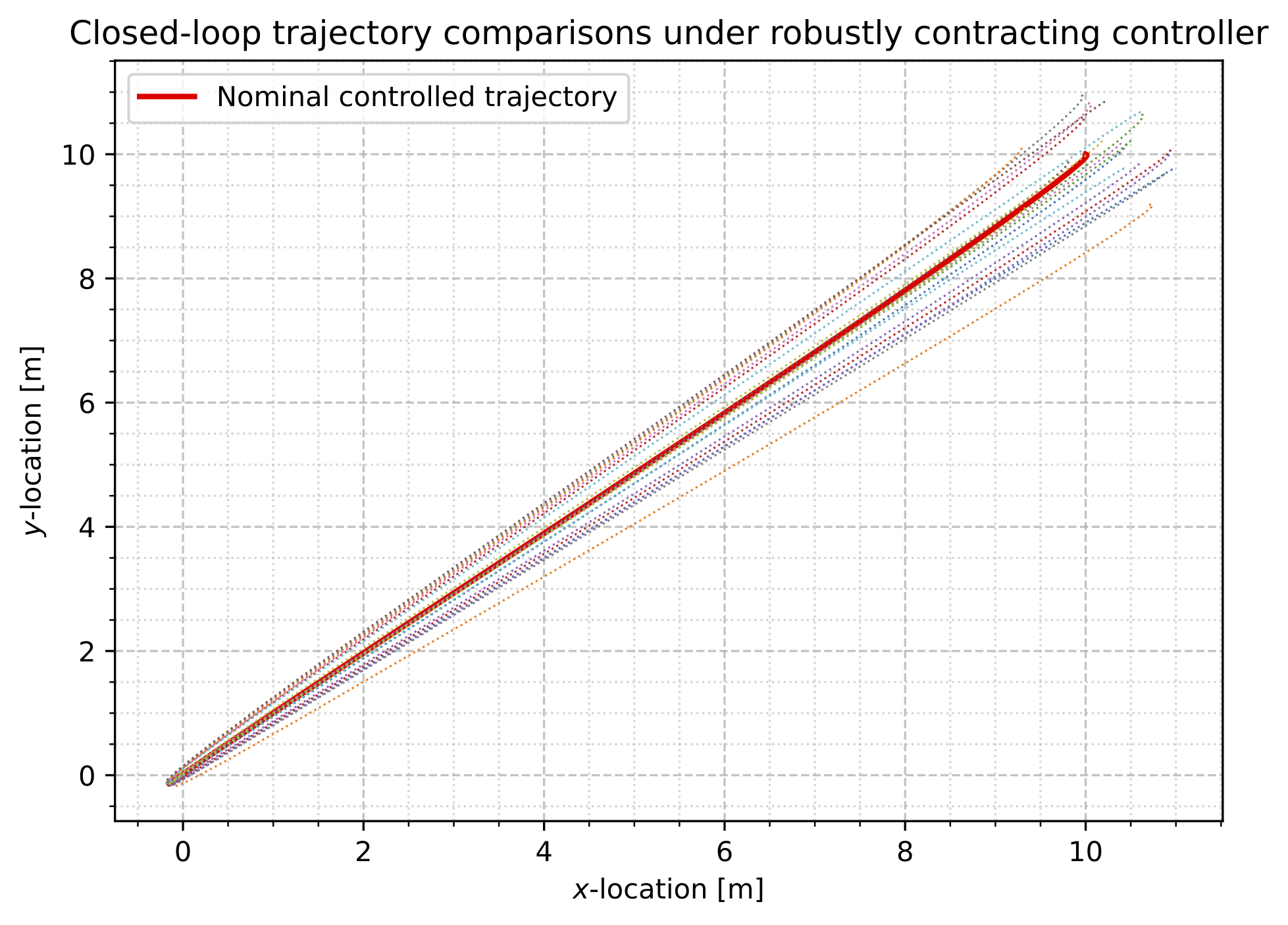}
	\caption{\hil{Trajectories of the nominal and off-nominal systems under the robustly contracting controller, where the first two states (position) are plotted. 
    }}
	\label{fig:trajectories}
\end{figure}

In future work, we shall present the performance of a complete closed-loop system incorporating a set-propagation-based observer structure.

\section{Conclusion}

In this work, we have presented a novel data-driven sum-of-squares-based approach to ensure local stability and contraction on a class of systems. We have shown how data-driven conditions can be developed to enable online controller synthesis based on recent advances in the cautious optimization and data informativity, as well as by leveraging a dual condition to Lyapunov's stability theorem. We have demonstrated the approach on an unmanned aerial vehicle system, showing how a fixed gain controller and associated contraction conditions can be developed directly from data.

\hil{Future work will include additional theory to show that the approach extends to general continuous nonlinear systems, provided that they can be approximated sufficiently well by a polynomial system. We also plan on extending the class of systems to those with nonlinear input maps, in addition to accounting for unknown output map switching as discussed in our prior work \cite{El-Kebir2024b}.}

\section*{Acknowledgments}

The authors thank Prof. Eduardo Sontag for his invaluable suggestions and fruitful discussions, and the anonymous reviewers for their insightful comments.




%
\bibliographystyle{IEEEtran}
\bibliography{root.bib}

%

%
%
%





\end{document}